\DeclareFontFamily{U}{shuffle}{}
\DeclareFontShape{U}{shuffle}{m}{n}{ <-8>shuffle7 <8->shuffle10}{}
\newcommand{\bfs}{{\boldsymbol{\sl{s}}}}
\def\int{\displaystyle\!int}
\def\lim{\displaystyle\!lim}
\def\sum{\displaystyle\!sum}
\def\sup{\displaystyle\!sup}
\def\inf{\displaystyle\!inf}
\def\cap{\displaystyle\!cap}
\def\max{\displaystyle\!max}
\def\min{\displaystyle\!min}
\def\frac{\displaystyle\!frac}
\let\oldsection\section
\renewcommand\section{\setcounter{equation}{0}\oldsection}
\def\R{\mathbb{R}}
\def\N{\mathbb{N}}
\def\Q{\mathbb{Q}}
\def\ze{\zeta}
\theoremstyle{plain}
\newtheorem{thm}{Theorem}[section]
\newtheorem{lem}[thm]{Lemma}
\newtheorem{cor}[thm]{Corollary}
\newtheorem{pro}[thm]{Proposition}
\theoremstyle{definition}
\newtheorem{defn}{Definition}[section]
\newtheorem{re}[thm]{Remark}
\newtheorem{exa}[thm]{Example}
\begin{document}
	\title{\bf A Family of Berndt-Type Integrals and Associated Barnes Multiple Zeta Functions}
	\author{ {Xinyue Gu${}^{a,}$\thanks{Email: gxy2024@ahnu.edu.cn}},\quad {Ce Xu${}^{a,}$\thanks{Email: cexu2020@ahnu.edu.cn}}\quad
		and { Jianing Zhou${}^{b}$\thanks{Email: 202421511272@smail.xtu.edu.cn}}\\[1mm]
		\small a. School of Mathematics and Statistics, Anhui Normal University, \\ \small Wuhu 241002, P.R. China\\
		\small b. School of
		Mathematics and Computational Science, Xiangtan University, \\ \small Xiangtan 411105,  P.R. China}
	
	\date{}
	\maketitle
	
	\noindent{\bf Abstract.} In this paper, we focus on the calculation of a specific type of Berndt integral, which exclusively involves (hyperbolic) cosine functions. Initially, this integral is transformed into a Ramanujan-type hyperbolic (infinite) sum via contour integration. Subsequently, a function incorporating \(\theta\) is defined. By employing the residue theorem, the mixed Ramanujan-type hyperbolic (infinite) sum with both hyperbolic cosine and hyperbolic sine in the denominator is converted into a simpler Ramanujan-type hyperbolic (infinite) sum, which contains only hyperbolic cosine or hyperbolic sine in the denominator. The simpler Ramanujan-type hyperbolic (infinite) sum is then evaluated using Jacobi elliptic functions, Fourier series expansions, and Maclaurin series expansions. Ultimately, the result is expressed as a rational polynomial of \(\Gamma(1/4)\) and \(\pi^{-1/2}\). Additionally, the integral is related to the Barnes multiple zeta function, which provides an alternative method for its calculation.\\
	\medskip
	\noindent{\bf Keywords}: Berndt-type integral, Hyperbolic functions, Contour integration, Jacobi elliptic functions, Barnes multiple zeta functions.

	\noindent{\bf AMS Subject Classifications (2020):}  33E05, 33E20, 44A05, 11M99.

	\section{Introduction}
In \cite{XZ2024}, the second author of this paper and Zhao referred to integrals of the following form as $m$-th order Berndt-type integrals, since it was approximately a decade ago that Professor Berndt first systematically studied the case when $m=1$. For positive integers $s$ and $m$, the \emph{Berndt-type integrals} of order $m$ is defined by
\begin{align}\label{BTI-definition-1}
{\rm BI}^{\pm}(s,m):=\int_0^\infty \frac{x^{s-1}\mathrm{d}x}{(\cos x\pm\cosh x)^m},
\end{align}
where $s\geq 1$ and $m\geq 1$ for the ``$+$" case and $s\geq 2m+1$ otherwise. In fact, the study of Berndt-type integrals can be traced back to 1912, when the renowned Indian mathematician Ramanujan investigated the following improper integrals involving trigonometric and hyperbolic functions. He conjectured the integral identity (see \cite[pp. 325-326]{Rama1916})
\begin{equation*}
		\int_0^\infty \frac{\sin(nx)}{x(\cos x+\cosh x)}\mathrm{d}x=\frac{\pi}{4}\quad (n\in\N),
	\end{equation*}
which was only rigorously proven four years later by Wilkinson \cite{W1916}.

About 80 years afterward, the study of such integrals gained renewed interest with the work of Ismail and Valent \cite{Ismail1998}, who discovered another remarkable integral (\emph{Ismail-type integrals}), exemplified by
	\begin{align}\label{A.KuznetsovInt}
		\int_{-\infty}^{\infty}{\frac{\mathrm{d}t}{\cos \left( K\sqrt{t} \right) +\cosh \left( K'\sqrt{t} \right)}=2}.
	\end{align}
Here $K\equiv K(x)$ denotes the \emph{complete elliptic integral of the first kind} and $K':=K(1-x)$. Kuznetsov \cite{K2017} generalized Ismail-type integrals using contour integration and theta functions, proving
	\begin{align}\label{Ims-Kuz+}
		\frac1{2}\int_{-\infty}^\infty \frac{t^n \mathrm{d}t}{\cos(K\sqrt{t})+\cosh(K'\sqrt{t})}=(-1)^n \frac{\mathrm{d}^{2n+1}}{\mathrm{d}u^{2n+1}} \frac{{\rm sn} (u,k)}{{\rm cd}(u,k)}|_{u=0},
	\end{align}
	where $x=k^2\ (0<k<1)$ and ${\rm sn} (u,k)$ is the \emph{Jacobi elliptic function} defined through the inversion of the elliptic integral
	\begin{align}
		u=\int_0^\varphi \frac{\mathrm{d}t}{\sqrt{1-k^2\sin^2 t}}\quad (0<k^2<1),
	\end{align}
	that is, ${\rm sn}(u):=\sin \varphi$. As before, $k$ is referred to as the elliptic modulus. We also write $\varphi={\rm am}(u,k)={\rm am}(u)$ and call it the Jacobi amplitude.
	The Jacobi elliptic function ${\rm cd} (u,k)$ can be defined as follows
	\begin{align*}
		& {\rm cd}(u,k):=\frac{{\rm cn}(u,k)}{{\rm dn}(u,k)},
	\end{align*}
	where ${\rm cn}(u,k):=\sqrt{1-{\rm sn}^2 (u,k)}$ and ${\rm dn} (u,k):=\sqrt{1-k^2{\rm sn}^2(u,k)}$.

In the past decade, research on these analogous integrals has  attracted increasing interest, yielding notable progress. Recent advances in this direction can be found in the works of \cite{BV2024,PanWang2025,RXYZ2024,RXZ2023,XZ2023,XZ2024,ZhangRui2024,Zhou2025} and the references therein. A significant turning point in this line of research occurred in 2016 when Professor Berndt \cite{Berndt2016} developed a systematic approach to study ${\rm BI}^{\pm}(s,1)$. Berndt further advanced the field by evaluating such integrals directly using Cauchy's residue theory, Fourier series expansions, and Maclaurin series expansions of Jacobi elliptic functions. In their series of works \cite{RXZ2023,XZ2023,XZ2024}, Rui, Xu, and Zhao developed an extended framework based on Berndt's approach, which led to fundamental structural theorems for higher-order Berndt-type integrals. Letting $X=\Gamma^4(1/4)$ and $Y=\pi^{-1}$, they proved that for all integers $m\geq 1$ and $p\geq [m/2]$ (see \cite[Thms 1.1 and 1.2]{XZ2024}),
\begin{align*}
{\rm BI}^+(4p+2,m)\in\mathbb{Q}[X,Y]\quad\text{and}\quad {\rm BI}^-(a,m)\in\mathbb{Q}[X,Y] \quad (0<a-2m\equiv 1 \pmod{4}),
\end{align*}
and the optimal range for the degrees of the polynomials in $X$ and $Y$ has been precisely determined.	

Very recently, a groundbreaking connection between Berndt-type integrals and \emph{Barnes multiple zeta functions} (See Section \ref{sec-Berndt-Barmzf}) was established by Bradshaw and Vignat \cite[Cor. 5]{BV2024}, who also linked Jacobi elliptic functions to Ismail-type integrals: for $n\in \N_0:=\N\cup \{0\}$,

	\begin{align}\label{Ims-Kuz-1}
		\frac1{2}\int_{-\infty}^\infty \frac{x^{n+1} \mathrm{d}x}{\cos(\sqrt{x}K)-\cosh(\sqrt{x}K')}=(-1)^{n+1}4 \frac{\mathrm{d}^{2n+1}}{\mathrm{d}u^{2n+1}} \frac{{\rm sn}^2 (u,k)}{{\rm cd}^2(u,k){\rm sd}(2u,k)}|_{u=0},
	\end{align}
	where the Jacobi elliptic function ${\rm sd}(u)\equiv {\rm sd}(u,k)$ is defined by
	\begin{align}\label{defin-sd}
		{\rm sd}(u,k):=\frac{{\rm sn}(u,k)}{{\rm dn}(u,k)}.
	\end{align}
	
As shown in the preceding analysis, the integrands of the previously studied Berndt-type integrals consist solely of additive or subtractive combinations of a single trigonometric cosine and a single hyperbolic cosine function

	In this paper, we focus on integrals of the form:
	\begin{align}\label{defn-Berndt-Int}
		\int_0^{\infty}{\frac{x^{s}\mathrm{d}x}{\left[ \cosh(2x)-\cos(2x) \right] \left[ \cosh x+\cos x \right]}},
	\end{align}
which feature purely (hyperbolic) cosine integrands and the integrand involves the product of two terms: (i) the sum of a cosine and a hyperbolic cosine function, and (ii) their difference.
By leveraging contour integration and series expansions of Jacobi elliptic functions, we derive structural theorems and explicit evaluations for these Berndt-type integrals. We are going to prove the following evaluations (see Theorem \ref{mixInt1}): let $\Gamma:=\Gamma(1/4)$, for $m\in \N$,
	\begin{align}\label{StrR-Berndt-Int}
		\int_0^{\infty}{\frac{x^{4m-1}\mathrm{d}x}{\left[ \cosh(2x)-\cos(2x) \right] \left[ \cosh x+\cos x \right]}}&\in \mathbb{Q} \frac{\Gamma ^{8m-4}}{\pi ^{2m-1}}+ \frac{\mathbb{Q}}{\sqrt{2}}\frac{\Gamma ^{8m-2}}{\pi^{(4m-1)/2}}\nonumber\\
		&\quad+\mathbb{Q}\frac{\Gamma ^{8m}}{\pi^{2m}}+ \frac{\mathbb{Q}}{\sqrt{2}}\frac{\Gamma ^{8m+2}}{\pi^{(4m+3)/2}}+\mathbb{Q} \frac{\Gamma ^{8m+4}}{\pi^{2m+3}}.
	\end{align}
Furthermore, we compute the specific coefficients preceding each term and set up some explicit relations between the Berndt-type integrals and Barnes multiple zeta functions. Additionally, we establish their relationship with Barnes multiple zeta functions, offering new closed-form evaluations and insights (see Theorem \ref{GBZeta}). Furthermore, it should be noted that the last two authors of this paper, together with Chen in a recent joint work \cite{CXZ2026}, applied methods similar to those used in the present article to obtain a structural result analogous to \eqref{StrR-Berndt-Int} for a mixed Berndt-type integral. Specifically, in [1], the denominator in \eqref{defn-Berndt-Int} was replaced by
$
\bigl[ \cosh(2x) - \cos(2x) \bigr] \bigl[ \cosh x - \cos x \bigr],
$
and the numerator exponent was taken as \( s = 4m - 3 \).

	\section{Some Preliminary Results}
	In this section, we present fundamental mathematical tools and special functions that form the basis for our subsequent analysis. Let $_2F_1\left( a,b;c;x \right)$ denote the \emph{Gaussian hypergeometric function} defined by
	\begin{equation*}
		{_2}F_1(a,b;c;x)=\sum\limits_{n=0}^\infty \frac{(a)_n(b)_n}{(c)_n} \frac {x^n}{n!}\quad (a,b,c\in\mathbb{C}),
	\end{equation*}
	where $\left( a \right) _0:=1$ and $\left( a \right)_n=\frac{\Gamma \left( a+n \right)}{\Gamma \left( a \right)}=\prod_{j=0}^{n-1}{\left( a+j \right)}\quad (n\in \mathbb{N}),
	$ and $ \Gamma(x)$ is the \emph{Gamma function} defined by
	$$\Gamma \left( z \right) :=\int_0^{\infty}{e^{-t} t^{z-1}\mathrm{d}t}\quad \left( \Re \left( z \right) >0 \right).$$
	We know that $_2F_1\left( a,b;c;x \right)$  plays a crucial role in evaluating elliptic integrals, particularly the  \emph{complete elliptic integral of the first and second kinds} are defined by (Whittaker and Watson \cite{WW1966})
	\begin{equation*}
	K:=K\left( k^2 \right) =\int_0^{\pi /2}{\frac{\mathrm{d}\varphi}{\sqrt{1-k^2\sin ^2\varphi}}=\frac{\pi}{2}}\,\,_2F_1\left(
	\frac{1}{2},\frac{1}{2};1;k^2 \right) ,
	\end{equation*}

    \begin{equation*}
		E:=E(k^2):=\int_{0}^{\pi/2} \sqrt{1-k^2\sin^2\varphi}\mathrm{d}\varphi=\frac {\pi}{2} {_2}F_{1}\left(-\frac {1}{2},\frac {1}{2};1;k^2\right).
	\end{equation*}
	
Following Ramanujan's notation, we define key variables and their relationships:
	\begin{align}\label{notations-Ramanujan}
		x:=k^2,\ y\left( x \right):=\pi K'/K,\ q\equiv q\left( x \right):=e^{-y},\ z:=z\left( x \right) =2K/\pi,\ z'=\mathrm{d}z/\mathrm{d}x,
	\end{align}
    where $ K'=K(1-x)$.
	Then$$
	\frac{\mathrm{d}^nz}{\mathrm{d}x^n}=\frac{\left( 1/2 \right) _{n}^{2}}{n!} \,_{2}F_1\left( \frac{1}{2}+n,\frac{1}{2}+n,1+n;x \right) .
	$$
	Applying the identity (\cite{A2000})
	$$
	\,_{2}F_1\left( a,b,\frac{a+b+1}{2};\frac{1}{2} \right) =\frac{\Gamma \left( {1}/{2} \right) \Gamma \left( {(a+b+1)}/{2} \right)}{\Gamma \left( {(a+1)}/{2} \right) \Gamma \left( {(b+1)}/{2} \right)},
	$$
	these satisfy important differential properties
	\begin{align}\label{x=1/2,y=pi}
		\frac{\mathrm{d}^nz}{\mathrm{d}x^n}\mid_{x=\tfrac{1}{2}}^{}=\frac{\left( 1/2 \right) _{n}^{2}\sqrt{\pi}}{\Gamma ^2\left( \frac{n}{2}+\frac{3}{4} \right)}.
	\end{align}
	
	In particular, taking $n = 0,1,2$ in (\ref{x=1/2,y=pi}), we obtain that
	$$
	y=\pi, \quad z\left(\frac{1}{2}\right)=\frac{\Gamma^{2}(1 / 4)}{2 \pi^{3 / 2}}, \quad z^{\prime}\left(\frac{1}{2}\right)=\frac{4 \sqrt{\pi}}{\Gamma^{2}(1 / 4)}, \quad z^{\prime \prime}\left(\frac{1}{2}\right)=\frac{\Gamma^{2}(1 / 4)}{2 \pi^{3 / 2}},
	$$
	which are essential for our explicit evaluations. The derivative relation (see \cite[pp. 120, Entry 9(i)]{B1991}):
    \begin{equation}\label{dx/dy}
    \frac{\mathrm{d}x}{\mathrm{d}y}=-x(1-x)z^{2}
    \end{equation}
	provides a crucial connection between these variables that we exploit in later calculations.

	\section{Some Definitions and Lemmas}
	This section establishes the core mathematical framework for our subsequent analysis through carefully constructed definitions and technical lemmas.
    \begin{defn}
		Let $s\in\mathbb{C}$, $g_1,g_2,g_3,g_4\in\mathbb{N}_0$ and $a,b,\theta\in\mathbb{R}$ with $|\theta| <2b\pi$ and $a,b\neq 0$. Define
		\begin{align*}
			&Z(s,\theta;a,b|g_1,g_2,g_3,g_4):=\frac{\pi^{g_1+g_2+g_3+g_4}\sinh(\theta s)}{\sin^{g_1}(a\pi s)\cos^{g_2}(a\pi s)\sinh^{g_3}(b\pi s)\cosh^{g_4}(b\pi s)},\\&\bar{Z}(s,\theta;a,b|g_1,g_2,g_3,g_4):=\frac{\pi^{g_1+g_2+g_3+g_4}\cosh(\theta s)}{\sin^{g_1}(a\pi s)\cos^{g_2}(a\pi s)\sinh^{g_3}(b\pi s)\cosh^{g_4}(b\pi s)}.
		\end{align*}
		
	\end{defn}
	\begin{defn}
		Let $m_1,m_2\in\mathbb{Z}$, $m_1+m_2\geq1$ and $p\in\mathbb{Z}$. Define
		\begin{align*}
			G_{p,m_1,m_2}(y)&:=\sum_{n = 1}^{\infty}\frac{(-1)^n n^p}{\sinh^{m_1}(ny)\cosh^{m_2}(ny)},\\
			\tilde{G}_{p,m_1,m_2}(y)&:=\sum_{n = 1}^{\infty}\frac{(-1)^n (2n-1)^{p}}{\sinh^{m_1}((2n-1)y/2)\cosh^{m_2}((2n-1)y/2)}.
		\end{align*}
	\end{defn}
	
	\begin{lem}\emph{(\cite{PFBE1998})}\label{Sum Res=0}
		Let $\xi(s)$ be a kernel function and let $r(s)$ be a function that is $\mathscr{O}(s^{-2})$ at infinity. Then
		\begin{equation}
			\sum_{\substack{\alpha \in O \\}} \mathrm{Res}\left(r(s)\xi(s),s=\alpha\right) + \sum_{\substack{\beta \in S \\}} \mathrm{Res}\left(r(s)\xi(s),s = \beta\right) = 0,
		\end{equation}
		where $S$ is the set of poles of $r(s)$ and $O$ is the set of poles of $\xi(s)$ that are not poles of  $r(s)$. Here $\text{Res}\left(r(s),s = \alpha\right)$ denotes the residue of $r(s)$ at $s = \alpha$. The kernel function $\xi(s)$ is meromorphic in the whole complex plane and satisfies $\xi(s) = o(s)$ over an infinite collection of circles $|s| = \rho_k$ with $\rho_k \rightarrow \infty$.
		
	\end{lem}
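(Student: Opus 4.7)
The plan is to derive this identity from the classical principle that the sum of all residues of a meromorphic function with sufficient decay is zero. The key analytic inputs are already packaged into the hypotheses: the decay $r(s)=O(s^{-2})$ at infinity together with $\xi(s)=o(s)$ on the circles $|s|=\rho_k$ makes the integral of $r(s)\xi(s)$ around those circles tend to $0$, while every pole of $r(s)\xi(s)$ lies in the disjoint union $O\sqcup S$.

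First I would apply Cauchy's residue theorem to $r(s)\xi(s)$ on the disk $|s|\leq\rho_k$, where I may adjust $\rho_k$ slightly if needed so that the circle avoids all poles of $r(s)\xi(s)$ (the pole set is discrete, so this is cost-free and preserves $\rho_k\to\infty$). This gives
\begin{equation*}
\frac{1}{2\pi i}\oint_{|s|=\rho_k} r(s)\xi(s)\,ds \;=\; \sum_{\substack{\alpha\in O\\ |\alpha|<\rho_k}}\mathrm{Res}\bigl(r(s)\xi(s),s=\alpha\bigr) \;+\; \sum_{\substack{\beta\in S\\ |\beta|<\rho_k}}\mathrm{Res}\bigl(r(s)\xi(s),s=\beta\bigr),
\end{equation*}
where the disjointness of $O$ and $S$ (baked into the statement, since $O$ explicitly excludes poles of $r$) ensures every pole of the product is counted exactly once; points in $O$ where a zero of $r$ happens to cancel the pole of $\xi$ simply contribute zero residues and do no harm.

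Next I would carry out the boundary estimate. By hypothesis there exists $C>0$ with $|r(s)|\leq C/|s|^2$ for $|s|$ large, and $|\xi(s)|\leq\varepsilon_k\,\rho_k$ on $|s|=\rho_k$ with $\varepsilon_k\to 0$. The ML-inequality then gives
\begin{equation*}
\biggl|\oint_{|s|=\rho_k} r(s)\xi(s)\,ds\biggr| \;\leq\; 2\pi\rho_k\cdot\frac{C}{\rho_k^{\,2}}\cdot\varepsilon_k\rho_k \;=\; 2\pi C\,\varepsilon_k \;\longrightarrow\; 0.
\end{equation*}
Letting $k\to\infty$ in the residue identity therefore forces the two infinite residue sums to converge (being monotone limits of partial sums bounded by the tail of an absolutely convergent object) and to add up to $0$, which is the stated identity.

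The main obstacle, insofar as there is one, is bookkeeping rather than analysis: one must confirm that the partition of the pole set of $r(s)\xi(s)$ into the $O$-part and the $S$-part is exhaustive and non-overlapping, and that the order of summation on the two series is legitimate. Both concerns are settled by the hypotheses (the definitions of $O$ and $S$ make $O\cap S=\emptyset$ and $O\cup S\supseteq\{\text{poles of }r\xi\}$) and by the uniform tail bound furnished by the $o(1)$ contour estimate, which simultaneously controls the $O$- and $S$-tails. No further machinery beyond standard contour calculus is required.
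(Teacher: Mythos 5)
Your argument is correct and is the standard proof of this classical summation-of-residues principle; the paper itself states the lemma without proof, citing Flajolet--Salvy, and your contour-plus-ML-estimate derivation is exactly the argument given in that reference. One small point of precision: since $r(s)=O(s^{-2})$ at infinity forces $r$ to be holomorphic outside a compact set, $S$ is finite, and the convergence of the remaining infinite sum over $O$ should be understood as the limit of the partial sums over $|\alpha|<\rho_k$ (your ``monotone limits of partial sums'' phrasing is not quite the right justification, as the residues are complex and need not converge absolutely), but this is the standard interpretation and does not affect the validity of the proof.
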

	
	\begin{lem}\emph{(\cite[Thm. 2.4]{XuZhao-2024})}\label{lem-2,transform}
		Let $x, y, z$ and $z'$ satisfy \eqref{notations-Ramanujan}. Given the formula $\Omega(x,e^{-y},z,z')=0$, we have the transformation formula
		\begin{align}\label{lem-for-one}
			&\Omega \left( 1-x,e^{-\pi ^2/y},yz/\pi ,\frac{1}{\pi}\left( \frac{1}{x\left( 1-x \right) z}-yz' \right) \right) =0.
		\end{align}
	\end{lem}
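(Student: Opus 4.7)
The plan is to view this lemma as an incarnation of the classical modular involution $k \leftrightarrow k' = \sqrt{1-k^2}$, equivalently $\tau \leftrightarrow -1/\tau$ with $\tau = iK'/K$. Under this involution $K$ and $K'$ swap roles, so each of the four ``coordinates'' $x, e^{-y}, z, z'$ is mapped to an explicit expression in $x, y, z, z'$; once these four individual substitutions are verified, invariance of any relation $\Omega(x,e^{-y},z,z')=0$ is automatic.

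First I would handle the three elementary substitutions directly from the definitions in \eqref{notations-Ramanujan}. Under $x \mapsto 1-x$ we have $K \leftrightarrow K'$, so $z = 2K/\pi$ is sent to $2K'/\pi$; using $y = \pi K'/K$ this equals $yz/\pi$. Simultaneously $y = \pi K'/K \mapsto \pi K/K' = \pi^2/y$, so $e^{-y} \mapsto e^{-\pi^2/y}$. These are just the complementary-modulus identities for the complete elliptic integrals and the nome.

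The only nontrivial step is tracking $z'$. Setting $\tilde{x} = 1-x$ and $\tilde{z}(\tilde{x}) = yz/\pi$, I apply the chain rule:
\[
\frac{d\tilde{z}}{d\tilde{x}} \;=\; -\frac{d}{dx}\!\left(\frac{yz}{\pi}\right) \;=\; -\frac{1}{\pi}\left(\frac{dy}{dx}\, z + y\,z'\right).
\]
Substituting $dy/dx = -1/[x(1-x)z^2]$, which is the reciprocal of \eqref{dx/dy}, gives
\[
\frac{d\tilde{z}}{d\tilde{x}} \;=\; \frac{1}{\pi}\left(\frac{1}{x(1-x)z} - y z'\right),
\]
matching the fourth argument in \eqref{lem-for-one}. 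Plugging the four substitutions into $\Omega(x,e^{-y},z,z') = 0$ then produces \eqref{lem-for-one}.

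The main obstacle is conceptual rather than computational: one must be confident that the relation $q = e^{-y}$ really is compatible with the modular involution in the stated way, i.e.\ that the transformed nome is $e^{-\pi^2/y}$ rather than merely a nome for a different (isogenous) elliptic curve. This is the standard $\tau \to -1/\tau$ behaviour of $q = e^{2\pi i \tau}$ with $\tau = iK'/K$, and after citing this modular fact the remainder of the argument is the short chain-rule calculation above.
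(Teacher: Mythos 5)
Your argument is correct, and it is worth noting that the paper itself offers no proof of this lemma at all --- it is imported verbatim as \cite[Thm. 2.4]{XuZhao-2024} --- so your write-up supplies a self-contained justification where the authors rely on a citation. The substance of your proof is exactly the right one: the lemma is nothing more than the statement that the four quantities $1-x$, $e^{-\pi^2/y}$, $yz/\pi$, $\frac{1}{\pi}\bigl(\frac{1}{x(1-x)z}-yz'\bigr)$ are what $x$, $e^{-y}$, $z$, $z'$ become under the complementary-modulus substitution $x\mapsto 1-x$ (which swaps $K$ and $K'$), so that any identity holding for all $x\in(0,1)$ transforms into the displayed one. Your chain-rule computation for the fourth slot, using $dy/dx=-1/[x(1-x)z^2]$ from \eqref{dx/dy}, is the only nontrivial step and it checks out. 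Two small remarks. First, the ``conceptual obstacle'' you flag about the nome is not really an obstacle: in \eqref{notations-Ramanujan} the symbol $q$ is \emph{defined} as $e^{-y}$ with $y=\pi K'/K$, so $q(1-x)=e^{-y(1-x)}=e^{-\pi^2/y}$ follows immediately from $y(1-x)=\pi K/K'=\pi^2/y$; no appeal to the modular transformation $\tau\mapsto-1/\tau$ is needed, although it is of course the same fact in disguise. Second, you should state explicitly the hypothesis under which the substitution is legitimate, namely that $\Omega(x,e^{-y},z,z')=0$ is an identity valid on an open interval of $x$ (as it is in every application in this paper), not merely at an isolated point; with that caveat made explicit your proof is complete.
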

	\begin{lem} \emph{(\cite[Lem. 1.2]{X2018})} \label{ExpandS-Xu}
		Let $n$ be an integer, then the following Laurent expansions hold:
		\begin{align}
			&\frac{\pi}{\sin(\pi s)}=(-1)^n \left( \frac{1}{s - n} +2\sum_{k=1}^{\infty}\overline{\zeta }(2k)(s -n)^{2k-1} \right),\label{pi/sin(pis)}\\
			&\frac{\pi}{\sinh(\pi s)}=(-1)^n \left( \frac{1}{s - ni} +2\sum_{k=1}^{\infty}(-1)^k\overline{\zeta }(2k)(s -ni)^{2k-1} \right),\label{pi/sinh(pis)}\\
			&\frac{\pi}{\cos(\pi s)}=(-1)^n \left\{ \frac{1}{s - \frac{2n-1}{2}} + 2 \sum_{k=1}^{\infty} \overline{\zeta }(2k)\left( s - \frac{2n-1}{2} \right)^{2k-1} \right\},\label{pi/cos(pis)}\\
			&\frac{\pi}{\cosh(\pi s)}=(-1)^n i \left\{ \frac{1}{s - \frac{2n-1}{2}i} + 2 \sum_{k=1}^{\infty}(-1)^k
			\overline{\zeta }(2k)\left( s - \frac{2n-1}{2}i \right)^{2k-1} \right\}.\label{pi/cosh(pis)}
		\end{align}
Here $\zeta(s)$ and $\overline{\zeta }(s)$ stand for the \emph{Riemann zeta function} and \emph{alternating Riemann zeta function},
		respectively, which are defined by
		$$
		\zeta \left( s \right) :=\sum_{n=1}^{\infty}{\frac{1}{n^s}}\quad(\Re(s)>1)\quad \text{and}\quad \bar{\zeta}\left( s \right) :=\sum_{n=1}^{\infty}{\frac{\left( -1 \right) ^{n-1}}{n^s}}\quad \left( \Re \left( s \right) >0 \right).
		$$
		When $s=2m$ is  even, Euler proved the famous formula
		$$
		\zeta \left( 2m \right) =\frac{\left( -1 \right) ^{m-1}B_{2m}\left( 2\pi \right) ^{2m}}{2\left( 2m \right) !},
		$$
		where $B_{2m}\in \Q$ are Bernoulli numbers defined by the generating function$$
		\frac{x}{e^x-1}=\sum_{n=0}^{\infty}{\frac{B_n}{n!}x^n.}
		$$

All these Laurent expansions have non-empty domain (annulus) of convergence. As these functions are meromorphic (with only poles as singularities), they are analytic in the complex plane except at the poles. Therefore, there are no singularities in the annular region $r<|s-a|<R$(for any expansion center $a$, $r$ as the distance from the center $a$ to its nearest singularity and $R$ as the distance from the center $a$ to its second nearest singularity), which satisfies the convergence condition of the Laurent series. What's more, for any integer $n$, the distance $r$ from the center $a$ to its nearest singularity is positive (since the poles are discrete and have no accumulation points), and $r<R$ (the distance from the center to the second nearest singularity is greater than that to the nearest one). Thus, the annular region $r<|s-a|<R$ is non-empty.
	\end{lem}
	
	\begin{lem} \emph{(\cite[Eqs. (4.4)-(4.7)]{XZ2023})} \label{ExpandS-C}
		Let $n\in \mathbb{Z}$. We have
		\begin{align}
			\frac{\left( -1 \right) ^n}{\sin \left( \frac{1\pm i}{2}z \right)}=\frac{1\mp i}{z-\left( 1\mp i \right) n\pi}+2\sum_{k=1}^{\infty}{\frac{\overline{\zeta }\left( 2k \right)}{\pi ^{2k}}\left( \frac{1\pm i}{2} \right) ^{2k-1}\left( z-\left( 1\mp i \right) n\pi \right) ^{2k-1}},
			\\
			\frac{\left( -1 \right) ^{n-1}}{\cos \left( \frac{1\pm i}{2}z \right)}=\frac{1\mp i}{z-\left( 1\mp i \right) \tilde{n}\pi}-2\sum_{k=1}^{\infty}{\frac{\overline{\zeta }\left( 2k \right)}{\pi ^{2k}}\left( \frac{1\pm i}{2} \right) ^{2k-1}\left( z-\left( 1\mp i \right) \tilde{n}\pi \right) ^{2k-1}},
		\end{align}
		where $\tilde{n}:=n-\frac{1}{2}$.
		As in Lemma \ref{ExpandS-Xu}, since the singularities of a meromorphic function are discrete, we can always find an annular region centered at the expansion center, which lies between the nearest and second nearest singularities. This ensures that the Laurent expansion converges in this region and guarantees a non-empty convergence domain.
	\end{lem}
	\begin{lem}\emph{(\cite{DCLMRT1992})}\label{cd Maclaurin expansion}
		The Maclaurin series of ${\rm cd}(u)$ and ${\rm nd}(u)$ have the forms
		\begin{align}
			\mathrm{cd}(u)=\sum_{n=0}^{\infty}{\mathrm{S}_{2n}\left( x \right) \frac{\left( -1 \right) ^nu^{2n}}{\left(2n\right)!}}\quad\text{and}
         \quad \mathrm{nd}(u)=\sum_{n=0}^{\infty}{\mathrm{A}_{2n}\left( x \right) \frac{\left( -1 \right) ^nu^{2n}}{\left(2n\right)!}},
		\end{align}	
where $\mathrm{nd}(u):=1/\mathrm{dn}(u)$, and $\mathrm{S}_{2n}\left( x \right),\mathrm{A}_{2n}\left( x \right)\in \mathbb{Z}[x]$. Here \( x = k^2 \) is consistent with the notation used in \eqref{notations-Ramanujan}.
	\end{lem}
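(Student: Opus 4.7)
My approach combines a parity observation with an induction driven by the closed ODE system satisfied by the three Jacobi functions $\mathrm{cd},\mathrm{sd},\mathrm{nd}$.

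First, I would note that $\mathrm{cd}(u)=\mathrm{cn}(u)/\mathrm{dn}(u)$ and $\mathrm{nd}(u)=1/\mathrm{dn}(u)$ are both even functions of $u$, since $\mathrm{cn}$ and $\mathrm{dn}$ are even while $\mathrm{sn}$ is odd. This forces every odd-order Maclaurin coefficient to vanish, so only even powers of $u$ survive in the expansions, which matches the stated form.

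Second, from the standard derivative formulas $\mathrm{sn}'=\mathrm{cn}\,\mathrm{dn}$, $\mathrm{cn}'=-\mathrm{sn}\,\mathrm{dn}$, $\mathrm{dn}'=-x\,\mathrm{sn}\,\mathrm{cn}$ (with $x=k^{2}$), a short quotient-rule calculation yields the closed system
\begin{align*}
\mathrm{cd}'(u)=-(1-x)\,\mathrm{sd}(u)\,\mathrm{nd}(u),\qquad \mathrm{sd}'(u)=\mathrm{cd}(u)\,\mathrm{nd}(u),\qquad \mathrm{nd}'(u)=x\,\mathrm{sd}(u)\,\mathrm{cd}(u),
\end{align*}
together with the Pythagorean-type identity $\mathrm{nd}^{2}(u)=1+x\,\mathrm{sd}^{2}(u)$, which is just a rewriting of $\mathrm{dn}^{2}=1-x\,\mathrm{sn}^{2}$. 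Every constant that appears here already lies in $\mathbb{Z}[x]$.

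Third, set $R:=\mathbb{Z}[x][\mathrm{cd}(u),\mathrm{sd}(u),\mathrm{nd}(u)]$. The display above shows that $d/du$ preserves $R$, so by induction on $n$ every derivative $d^{n}\mathrm{cd}/du^{n}$ and $d^{n}\mathrm{nd}/du^{n}$ lies in $R$. Specializing at $u=0$, where $\mathrm{cd}(0)=1$, $\mathrm{sd}(0)=0$, $\mathrm{nd}(0)=1$, produces a value in $\mathbb{Z}[x]$. Combined with the vanishing of all odd-order derivatives, this gives
\begin{align*}
\frac{d^{2n}\mathrm{cd}(u)}{du^{2n}}\bigg|_{u=0}=(-1)^{n}\mathrm{S}_{2n}(x),\qquad \frac{d^{2n}\mathrm{nd}(u)}{du^{2n}}\bigg|_{u=0}=(-1)^{n}\mathrm{A}_{2n}(x),
\end{align*}
with $\mathrm{S}_{2n}(x),\mathrm{A}_{2n}(x)\in\mathbb{Z}[x]$; the factor $(-1)^{n}$ is a sign convention absorbed into the polynomial.

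The main (and essentially only) obstacle is bookkeeping: one must confirm that iterating $d/du$ never introduces spurious denominators. Because $x$ enters each derivative formula only as a linear coefficient and $\mathrm{sd}(0)=0$ kills the sd-factors from cross terms at the base point, no denominators can appear and the induction is clean. As a sanity check I would verify the first cases $\mathrm{S}_{0}=1$, $\mathrm{S}_{2}=1-x$, $\mathrm{A}_{0}=1$, $\mathrm{A}_{2}=-x$, which confirm that the chosen sign convention $(-1)^{n}$ is consistent with integrality of $\mathrm{S}_{2n}$ and $\mathrm{A}_{2n}$.
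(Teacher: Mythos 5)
Your argument is correct, and it is worth noting that the paper itself offers no proof of this lemma at all: it is simply quoted from Dattoli et al.\ \cite{DCLMRT1992}, so there is nothing internal to compare against. What you supply is a clean, self-contained justification. The two pillars are both sound: (i) $\mathrm{cd}$ and $\mathrm{nd}$ are even because $\mathrm{cn},\mathrm{dn}$ are even and $\mathrm{sn}$ is odd, which kills the odd coefficients; (ii) the copolar trio $\mathrm{cd},\mathrm{sd},\mathrm{nd}$ satisfies the closed first-order system
\begin{align*}
\mathrm{cd}'=-(1-x)\,\mathrm{sd}\,\mathrm{nd},\qquad \mathrm{sd}'=\mathrm{cd}\,\mathrm{nd},\qquad \mathrm{nd}'=x\,\mathrm{sd}\,\mathrm{cd},
\end{align*}
whose coefficients lie in $\mathbb{Z}[x]$, so the ring $R=\mathbb{Z}[x][\mathrm{cd},\mathrm{sd},\mathrm{nd}]$ is stable under $d/du$ by the Leibniz rule, and evaluation at $u=0$ (where the generators take the values $1,0,1$) lands in $\mathbb{Z}[x]$. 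I verified your derivative formulas (e.g.\ $\mathrm{cd}'=-\mathrm{sn}(\mathrm{dn}^2-x\,\mathrm{cn}^2)/\mathrm{dn}^2$ with $\mathrm{dn}^2-x\,\mathrm{cn}^2=1-x$) and your initial cases $\mathrm{S}_0=1$, $\mathrm{S}_2=1-x$, $\mathrm{A}_0=1$, $\mathrm{A}_2=-x$; all check out. One small remark: your aside that ``$\mathrm{sd}(0)=0$ kills the sd-factors from cross terms'' is not actually needed to rule out denominators --- closure of $R$ under differentiation plus the $\mathbb{Z}$-valued specialization of the generators already does all the work --- but it does no harm. Since the statement only asserts the existence of \emph{some} polynomials $\mathrm{S}_{2n},\mathrm{A}_{2n}\in\mathbb{Z}[x]$ realizing the displayed expansions, the sign convention $(-1)^n$ is indeed just absorbed into the definition, as you say. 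Your proof therefore does more than the paper does for this lemma.
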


	\section{Berndt-Type Integrals via Hyperbolic Series}
In this section, we establish  exact correspondences between Berndt-type integrals and Ramanujan-type hyperbolic series through the contour integration techniques. Our approach reveals deep connections between these  	apparently distinct mathematical objects.

	\begin{thm}\label{thm1cosh++}
		For any positive integer $p\ge 3$, the following identity holds:
		\begin{align}
&\left(1+(-i)^{p+1}\right)\int_0^{\infty}{\frac{x^p\mathrm{d}x}{\left[ \cosh(2x)-\cos(2x)\right] \left[ \cosh x+\cos x \right]}}
			\nonumber\\
			&	=\frac{1}{2^{p+1}}ip(1-i)^{p-1}\pi^p\sum_{n=1}^{\infty}{\frac{\left( -1 \right) ^n(2n-1)^{p-1}}{\sinh\left((2n-1)\pi/2\right)\cosh ^2\left((2n-1)\pi/2\right)}}	\nonumber\\
			&\quad	+\frac{1}{8}(1-i)^{p+1}\pi^{p+1}\sum_{m=1}^{\infty}{\frac{\left( -1 \right) ^mm^p}{\sinh(m\pi)\cosh ^2(m\pi)}}-\frac{1}{2^{p+1}}i(1-i)^{p-1}\pi^{p+1}\sum_{n=1}^{\infty}{\frac{\left( -1 \right) ^n(2n-1)^p}{\cosh^3\left((2n-1)\pi/2\right)}}	\nonumber\\
			&\quad	-\frac{1}{2^{p+2}}i(1-i)^{p-1}\pi^{p+1}\sum_{n=1}^{\infty}{\frac{\left( -1 \right) ^n(2n-1)^p}{\sinh^2\left((2n-1)\pi/2\right)\cosh\left((2n-1)\pi/2\right)}}.
		\end{align}	
	\end{thm}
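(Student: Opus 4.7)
The plan is to evaluate the left-hand side by contour integration on a quarter-disk in the fourth quadrant. Setting
\[
f(z):=\frac{1}{[\cosh(2z)-\cos(2z)][\cosh z+\cos z]}=\frac{1}{2(\cosh z-\cos z)(\cosh z+\cos z)^2},
\]
I would integrate $z^pf(z)$ along the boundary $\gamma_R$ of $\{z:|z|\le R,\ -\pi/2\le\arg z\le 0\}$, traversed clockwise, with $R$ chosen along a discrete sequence avoiding poles. The key symmetry is $f(-iy)=-f(y)$, which follows from $\cosh(-i\zeta)=\cos\zeta$ and $\cos(-i\zeta)=\cosh\zeta$ applied to $\zeta=y,2y$. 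Combined with $dz=-i\,dy$ on the vertical leg traversed from $-iR$ down to $0$, a direct computation shows the two straight segments jointly contribute $(1+(-i)^{p+1})\int_0^R x^pf(x)\,dx$. A routine exponential-decay estimate for $(\cosh z\pm\cos z)^{-1}$ away from the rays $\arg z\in\{0,-\pi/4,-\pi/2\}$ shows the circular arc contribution vanishes as $R\to\infty$.

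Next I would enumerate the poles of $z^pf(z)$ inside $\gamma_R$. All poles of $f$ in the open fourth quadrant lie on the ray $\arg z=-\pi/4$ at the points $\tfrac{k\pi(1-i)}{2}$ for $k\ge 1$; the origin contributes nothing for $p\ge 3$ because $\cosh(2z)-\cos(2z)=4z^2+O(z^4)$ and $\cosh z+\cos z=2+O(z^4)$ give $z^pf(z)=O(z^{p-2})$, analytic at $0$. Even indices $k=2m$ yield simple poles $z_m:=m\pi(1-i)$ coming from the factor $\cosh z-\cos z$, while odd indices $k=2n-1$ yield double poles $w_n:=\tfrac{(2n-1)\pi(1-i)}{2}$ coming from $(\cosh z+\cos z)^2$. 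At each simple pole, the identities $\cosh z_m=\cos z_m=(-1)^m\cosh(m\pi)$ and $\sinh z_m=i\sin z_m=(-1)^m\sinh(m\pi)$, together with $\sinh(2m\pi)=2\sinh(m\pi)\cosh(m\pi)$, give
\[
\mathrm{Res}_{z=z_m}z^pf(z)=\frac{(m\pi)^p(1-i)^{p-1}(-1)^m}{8\sinh(m\pi)\cosh^2(m\pi)};
\]
multiplying by $-2\pi i$ (clockwise orientation), summing over $m\ge 1$, and using $(1-i)^2=-2i$ reproduces exactly the second sum on the right-hand side.

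The central technical step is the residue at the double pole $w_n$. Writing $\phi(z):=\cosh z+\cos z$ and $\psi(z):=2(\cosh z-\cos z)$, standard expansion in $u:=z-w_n$ yields
\[
\mathrm{Res}_{z=w_n}\frac{z^p}{\psi(z)\phi(z)^2}=\frac{1}{\phi'(w_n)^2\psi(w_n)}\Bigl[p\,w_n^{p-1}-w_n^p\Bigl(\frac{\phi''(w_n)}{\phi'(w_n)}+\frac{\psi'(w_n)}{\psi(w_n)}\Bigr)\Bigr].
\]
Setting $a:=\tfrac{(2n-1)\pi}{2}$ and using $\cos a=0$, $\sin a=(-1)^{n-1}$, one computes $\phi'(w_n)=(-1)^n(1+i)\cosh a$, $\phi''(w_n)=2(-1)^ni\sinh a$, $\psi(w_n)=4(-1)^ni\sinh a$, and $\psi'(w_n)=2(-1)^{n-1}(1-i)\cosh a$. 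The $p\,w_n^{p-1}$ piece directly produces the first sum in the theorem (the one with $(2n-1)^{p-1}/[\sinh a\cosh^2 a]$), while the logarithmic-derivative combination $\phi''/\phi'+\psi'/\psi$, after simplification with $(1+i)(1-i)=2$ and $(1\pm i)^2=\pm 2i$, cleanly splits into a $\tanh a$-contribution and a $\coth a$-contribution yielding respectively the $1/[\sinh^2 a\cosh a]$ sum and the $1/\cosh^3 a$ sum, with the coefficients $-\tfrac{i(1-i)^{p-1}\pi^{p+1}}{2^{p+2}}$ and $-\tfrac{i(1-i)^{p-1}\pi^{p+1}}{2^{p+1}}$ predicted by the theorem. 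The hard part will be exactly this double-pole bookkeeping: one must keep the two logarithmic derivatives carefully separated so that the two distinct hyperbolic sums on the right emerge individually with the precise coefficients claimed, and verify the signs produced by the chain $(-2\pi i)\cdot\text{Res}$ and the factor $\phi'(w_n)^2\psi(w_n)=-8(-1)^n\sinh a\cosh^2 a$. Once all three contributions at $w_n$ are combined with the simple-pole contribution at $z_m$ and multiplied by $-2\pi i$, the full right-hand side of the theorem is reproduced.
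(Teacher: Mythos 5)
Your proposal is correct and follows essentially the same route as the paper: the same quarter-circle contour in the fourth quadrant, the same identification of simple poles at $m\pi(1-i)$ and double poles at $(2n-1)\pi(1-i)/2$ on the ray $\arg z=-\pi/4$, a vanishing arc contribution, and the residue theorem; your simple-pole residue, the product $\phi'(w_n)^2\psi(w_n)=-8(-1)^n\sinh a\cosh^2 a$, and the logarithmic-derivative split of the double-pole residue all check out and reproduce the stated coefficients exactly. The only cosmetic difference is that the paper organizes the double-pole computation through the factorization $16\sin\bigl(\tfrac{1+i}{2}z\bigr)\sin\bigl(\tfrac{1-i}{2}z\bigr)\cos^2\bigl(\tfrac{1+i}{2}z\bigr)\cos^2\bigl(\tfrac{1-i}{2}z\bigr)$ and the local expansions of Lemma \ref{ExpandS-C}, whereas you use the standard second-order residue formula on $2(\cosh z-\cos z)(\cosh z+\cos z)^2$ — the two are equivalent.
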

	\begin{proof}
		Let $z=x+iy$ for $x,y\in \R$. Consider the contour integral
		\begin{align}\label{complex equation,cosh,1/2}
			\mathscr{I} _a=\lim_{R\rightarrow \infty} \int_{\mathrm{C}_R}^{}{\frac{z^p\mathrm{d}z}{\left[ \cosh(2z)-\cos(2z)\right] \left[ \cosh z+\cos z \right]}}=\lim_{R\rightarrow \infty} \int_{\mathrm{C}_R}^{}{F\left( z \right)}\mathrm{d}z,
		\end{align}
		where $C_R$ denotes the quarter-circular contour consisting of the interval $[0,R]$, the quarter-circle $\Gamma_R$ with $|z|=R\ (-\pi/2 \le arg z\le 0)$, and the segment $[-iR,0]$, the contour is illustrated below:
		\begin{center}
			\includegraphics[height=2in]{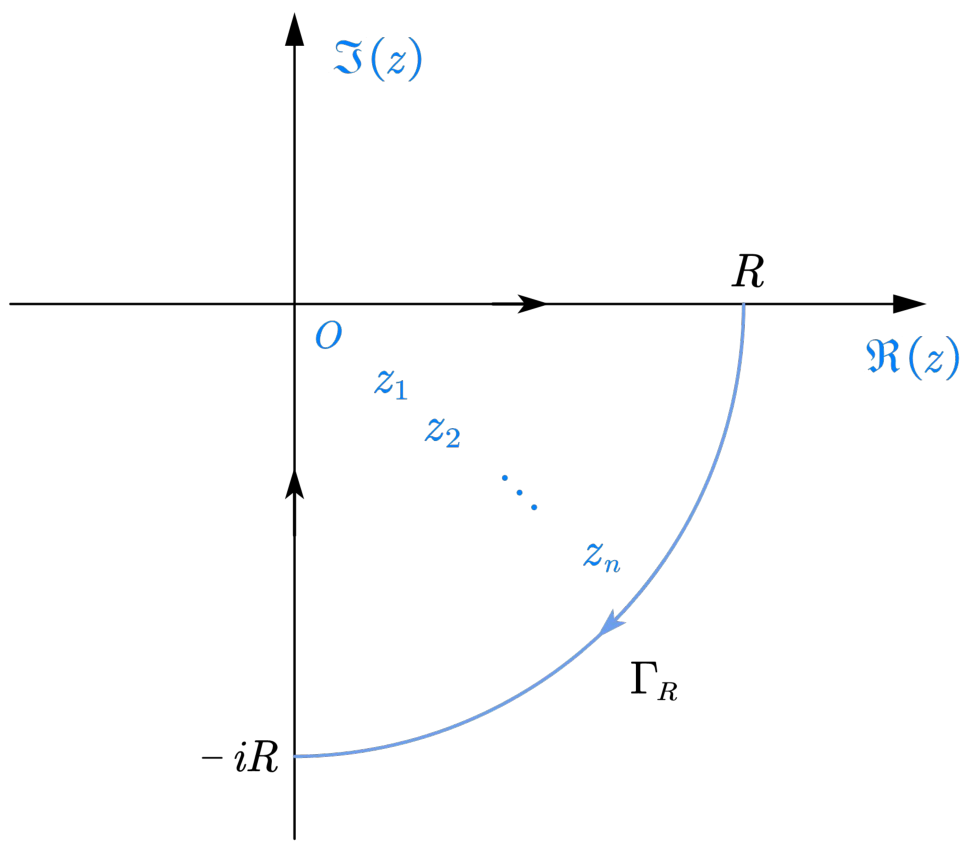}
		\end{center}
		The poles are located at
		$$
		\left[ \cosh(2z)-\cos(2z)\right] \left[ \cosh z+\cos z \right] =16\sin\left\{ \frac{1+i}{2}z \right\}\sin\left\{ \frac{1-i}{2}z \right\}\cos ^2\left\{ \frac{1+i}{2}z \right\}\cos ^2\left\{ \frac{1-i}{2}z \right\} =0.
		$$
		The poles enclosed by $C_R$ are located at $z_m=m\pi(1-i)\ (m\ge 1,\ |z_m|< R)$ and $z_n=(2n-1)\pi(1-i)/2\ (n\ge 1,\ |z_n|< R)$. Using Lemma \ref{ExpandS-C}, the residues of $\text{Res}[F(z),z]$ at these poles are given by:
		\begin{align}
			\mathrm{Res}\left[ F\left( z \right) ,z_m \right] &=\frac{\left( -1 \right) ^m(1-i)\left[ (1-i)m\pi \right] ^p}{16 \sin\left( -im\pi\right)\cos^2\left(m\pi\right)\cos^2\left(-im\pi\right)}
			=\frac{\left( -1 \right) ^{m-1}(1-i)\left[ (1-i)m\pi \right] ^p}{16 i\sinh\left( m\pi\right)\cosh^2\left(m\pi\right)},\\
			\mathrm{Res}\left[ F\left( z \right) ,z_n \right] &=(1-i)^2\left[\frac{p\left[(2n-1)\pi(1-i)/2\right]^{p-1}}{16(-1)^{n-1}\left[-i\sinh((2n-1)\pi/2)\right]\cosh^2((2n-1)\pi/2)}\right]\nonumber\\
			&\quad-(1-i)^2\left[\frac{(1-i)\left[(2n-1)\pi(1-i)/2\right]^p}{32(-1)^{n-1}\left[-i\sinh((2n-1)\pi/2)\right]^2\cosh((2n-1)\pi/2)}\right]\nonumber\\
			&\quad+(1-i)^2\left[\frac{(1-i)\left[(2n-1)\pi(1-i)/2\right]^p}{16(-1)^{n-1}\cosh^3((2n-1)\pi/2)}\right].
		\end{align}
		As $R\rightarrow \infty$, we deduce
		$$
		\int_{\Gamma _R}^{}{\frac{z^p\mathrm{d}z}{\left[ \cosh(2z)-\cos(2z)\right] \left[ \cosh z+\cos z \right]}}=o\left( 1 \right).
		$$
		Applying the residue theorem and taking the limit $R\rightarrow \infty$, we obtain
		\begin{align*}
			&-2\pi i\sum_{m=1}^{\infty}{\mathrm{Res}\left[ F\left( z \right) ,z_m \right]}-2\pi i\sum_{n=1}^{\infty}{\mathrm{Res}\left[ F\left( z \right) ,z_n \right]}=\lim_{R\rightarrow \infty} \int_{\mathrm{C}_R}^{}{\frac{z^p\mathrm{d}z}{\left[ \cosh(2z)-\cos(2z)\right] \left[ \cosh z+\cos z \right]}}
			\\
			&=\int_0^{\infty}{\frac{x^p\mathrm{d}x}{\left[ \cosh(2x)-\cos(2x)\right] \left[ \cosh x+\cos x \right]}}+i\int_0^{\infty}{\frac{\left( -ix \right) ^p\mathrm{d}x}{\left[ \cosh(-2ix)-\cos(-2ix)\right] \left[ \cosh(-ix)+\cos(-ix)\right]}}
			\\
			&=\int_0^{\infty}{\frac{x^p\mathrm{d}x}{\left[ \cosh(2x)-\cos(2x)\right] \left[ \cosh x+\cos x \right]}}+(-i)^{p+1}\int_0^{\infty}{\frac{x^p\mathrm{d}x}{\left[ \cosh(2x)-\cos(2x)\right] \left[ \cosh(x)+\cos(x)\right]}}.
		\end{align*}
		Combining these results yield the desired identity.
	\end{proof}
	
	\begin{thm}\label{Sin+,Cos-}
		For $a,b,\theta \in\mathbb{R}$ and $|\theta|<2b\pi$, $a,b \neq 0$,
		\begin{align}
			&b^2\pi\sum_{n = 1}^{\infty}\frac{(-1)^n\sinh(n \theta/a)}{\sinh(bn\pi/a)\cosh^2(bn\pi/a)}+a b\pi\sum_{n = 1}^{\infty}\frac{(-1)^n\sin(n\theta/b)}{\sinh(an\pi/b)}-a\theta\sum_{n = 1}^{\infty}\frac{(-1)^n\cos((2n-1)\theta/(2b))}{\sinh((2n-1)a\pi/(2b))}\nonumber\\
			&+a^2\pi\sum_{n = 1}^{\infty}\frac{(-1)^n\sin((2n-1)\theta/(2b))\cosh((2n-1)a\pi/(2b))}{\sinh^2((2n-1)a\pi/(2b))}+\frac{\theta b}{2}=0,\label{Z1th}\\
			&b^2\pi\sum_{n = 1}^{\infty}\frac{(-1)^n\cosh((2n-1)\theta/(2a))}{\sinh((2n-1)b\pi/(2a))\cosh^2((2n-1)b\pi/(2a))}+a\theta\sum_{n = 1}^{\infty}\frac{(-1)^n\sin((2n-1)\theta/(2b))}{\cosh((2n-1)a\pi/(2b))}\nonumber\\
			&+a^2\pi\sum_{n = 1}^{\infty}\frac{(-1)^n\cos((2n-1)\theta/(2b))\sinh((2n-1)a\pi/(2b))}{\cosh^2((2n-1)a\pi/(2b))}+a b\pi\sum_{n = 1}^{\infty}\frac{(-1)^n\cos(n\theta/b)}{\cosh(an\pi/b)}\nonumber\\&\quad+\frac{ab\pi}{2}=0,\label{Z2th}\\
			&b^2\pi\sum_{n = 1}^{\infty}\frac{(-1)^n\sinh((2n-1)\theta/(2a))}{\sinh^2((2n-1)b\pi/(2a))\cosh((2n-1)b\pi/(2a))}+a\theta\sum_{n = 1}^{\infty}\frac{(-1)^n\cos(n\theta/b)}{\cosh(an\pi/b)}\nonumber\\
			&-a^2\pi\sum_{n = 1}^{\infty}\frac{(-1)^n\sin(n\theta/b)\sinh(an\pi/b)}{\cosh^2(an\pi/b)}+a b\pi\sum_{n = 1}^{\infty}\frac{(-1)^n\sin((2n-1)\theta/(2b))}{\cosh((2n-1)a\pi/(2b))}+\frac{a\theta}{2}=0.\label{Z3th}
		\end{align}	
	\end{thm}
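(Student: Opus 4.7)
The plan is to apply the Residue Theorem (Lemma~\ref{Sum Res=0}) separately to the three kernel functions $Z_1(s,\theta;a,b)$, $Z_2(s,\theta;a,b)$, and $Z_3(s,\theta;a,b)$ introduced in Section~3, so that each of the identities \eqref{Z1th}--\eqref{Z3th} arises as the vanishing of the total residue of the corresponding $Z_j$ in the complex plane. The hypothesis $|\theta|<2b\pi$ is exactly what is needed to verify that, along a suitable sequence of expanding circles $|s|=\rho_k$ threading between the pole lattices, each $Z_j$ has the requisite decay, so that the ``contour at infinity'' contributes nothing.

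Taking $Z_1$ as the model, its poles split into four classes: the simple real poles $s=n/a$ of $\sin(a\pi s)$; the simple imaginary poles $s=ni/b$ of $\sinh(b\pi s)$; the double imaginary poles $s=(2n-1)i/(2b)$ of $\cosh^2(b\pi s)$; and the simple pole at $s=0$. At the simple poles the residues are computed directly, using $\sin(ian\pi/b)=i\sinh(an\pi/b)$ and $\cosh(in\pi)=(-1)^n$ to convert complex values into real hyperbolic and trigonometric quantities. Pairing $n\leftrightarrow-n$ (the two residues are equal) collapses each family into a sum over $n\ge1$, and the Laurent leading term $Z_1(s)\sim \pi^2\theta/(abs)$ gives residue $\pi^2\theta/(ab)$ at the origin. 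Multiplying the final residue equation by the overall rescaling factor $ab^2/(2\pi^2)$ turns the origin residue into the constant $\theta b/2$ and produces the coefficients $b^2\pi$ and $ab\pi$ in front of the two simple-pole sums of \eqref{Z1th}.

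The main obstacle is the double pole at $s=(2n-1)i/(2b)$. Setting $w=s-(2n-1)i/(2b)$ and using $\cosh(b\pi s)=(-1)^{n-1}i\sinh(b\pi w)$, one gets $1/\cosh^2(b\pi s) = -1/(b\pi w)^2 + 1/3 + O(w^2)$, so the residue equals $-R'((2n-1)i/(2b))/(b^2\pi^2)$, where $R(s) = \pi^4\sinh(\theta s)/[\sin(a\pi s)\sinh(b\pi s)]$. The logarithmic derivative $R'(s)/R(s) = \theta\coth(\theta s)-a\pi\cot(a\pi s)-b\pi\coth(b\pi s)$ simplifies at this pole because the last term vanishes (since $\cosh((2n-1)i\pi/2)=0$), and after reducing the remaining $\cot$ and $\coth$ at imaginary arguments to real quantities and summing the equal contributions from $\pm(2n-1)i/(2b)$, one obtains exactly the third and fourth series of \eqref{Z1th} with coefficients $-a\theta$ and $a^2\pi$. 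Identities \eqref{Z2th} and \eqref{Z3th} follow by the same procedure applied to $Z_2$ and $Z_3$: for $Z_2$, replacing $\sin(a\pi s)$ by $\cos(a\pi s)$ shifts the real poles to $(2n-1)/(2a)$ and replacing $\sinh(\theta s)$ by $\cosh(\theta s)$ turns the residue at $s=0$ into $\pi^3/b$, producing the constant $ab\pi/2$; for $Z_3$, the double-pole factor migrates from $\cosh^2(b\pi s)$ to $\sinh^2(b\pi s)$, so the double-pole expansion is applied at $s=ni/b$ instead, producing the two imaginary-axis series with $\cosh$ and $\cosh^2$ denominators in \eqref{Z3th}. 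Apart from these local modifications, the combinatorics of residues, the $n\leftrightarrow-n$ pairing, and the final rescaling are identical to the $Z_1$ case.
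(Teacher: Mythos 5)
Your proposal is correct and follows essentially the same route as the paper: apply the residue theorem to $Z_1$, $Z_2$, $Z_3$, classify the poles into the same four families, pair $n\leftrightarrow -n$ (all three functions are odd, so the paired residues agree), and rescale the vanishing residue sum to obtain \eqref{Z1th}--\eqref{Z3th}. The only cosmetic difference is that you extract the double-pole residue via the logarithmic derivative of the regular factor, whereas the paper expands $\left(\pi/\cosh(b\pi s)\right)^2$ (resp.\ $\left(\pi/\sinh(b\pi s)\right)^2$) explicitly; both yield the identical contributions \eqref{z1s=(2n-1)i/2b}, \eqref{z2s=(2n-1)i/2b}, \eqref{z3s=ni/b}.
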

	\begin{proof}
		This proof relies on analyzing some functions and the usual residue computation. These functions are meromorphic in the entire complex plane with some simple poles at specific points.

		First, we note that function \(Z_1:=Z(s,\theta;a,b|1,0,1,2)\) has poles at
$s=\pm\frac{n}{a}$ (simple poles),\ $s=\pm\frac{ni}{b}$ (simple poles),
$s=\pm\frac{(2n-1)i}{2b}$ (double poles) and $s=0$ (simple poles),
where \(n\in\mathbb{N}\).\\
		By using Lemma \ref{ExpandS-Xu}, for all \(n\in\mathbb{Z}\) we may deduce the asymptotic expansions of a few reciprocal quadratic trigonometric and hyperbolic functions as follows
		\begin{align}\label{(pi/cosh^2(bpis))^2}
			\left(\frac{\pi}{\cosh(b\pi s)}\right)^2&=-\frac{1}{\left(bs-\frac{2n - 1}{2}i\right)^2}+2\zeta(2)-6\zeta(4)\left(bs-\frac{2n - 1}{2}i\right)^2\nonumber\\
			&\quad+o\left(\left(bs-\frac{2n - 1}{2}i\right)^2\right).
		\end{align}
After direct computation, the residues of \( Z_1 \) at the simple poles \( \pm\frac{n}{a} \) and \( s = \pm\frac{ni}{b} \), as well as the residues at the second-order poles \( \pm\frac{(2n-1)i}{2b} \) derived by applying \eqref{(pi/cosh^2(bpis))^2}, are respectively as follows:
\begin{align}
			&\mathrm{Res}\left[ Z_1,s=\pm\frac{n}{a}\right] =\frac{\pi^3}{a}\frac{(-1)^n\sinh(n\theta/a)}{\sinh(b n\pi/a)\cosh^2(b n\pi/a)},\label{z1s=a}\\
			&\mathrm{Res}\left[ Z_1,s=\pm\frac{ni}{b}\right]=\frac{\pi^3}{b}\frac{(-1)^n\sin(n\theta/a)}{\sinh(a n\pi/b)},\label{z1s=ni/b}\\
			&\mathrm{Res}\left[ Z_1,s=\pm\frac{(2n-1)i}{2b}\right]=-\frac{\pi^2\theta}{b^2}\frac{(-1)^n\cos((2n-1)\theta/(2b))}{\sinh((2n-1)a \pi/(2b))}\nonumber\\
			&\qquad\qquad\qquad\qquad\qquad\qquad+\frac{a\pi^3}{b^2}\frac{(-1)^n \sin((2n-1)\theta/(2b))\cosh((2n-1)a\pi/(2b))}{\sinh^2((2n-1)a\pi/(2b))}.\label{z1s=(2n-1)i/2b}
		\end{align}
		Furthermore, by applying equations (\ref{pi/sin(pis)}) and (\ref{pi/sinh(pis)}), along with the power series expansion of the hyperbolic sine function, we deduce the following expansion:
		\begin{align}
			Z_1=\frac{\pi^2}{\cosh^2(b\pi s)}\left(\frac{\theta}{ab}\frac{1}{s}+\frac{\theta^3}{6ab}s-\left(\frac{b}{a}-\frac{a}{b}\right) \zeta(2)\theta s+o(1)\right),\quad s\to 0.\nonumber
		\end{align}
Then, using the above expansion, we can compute the residue of $Z_1$ at 0 as
		\begin{align}\label{z1s=0}
			&\mathrm{Res}\left[ Z_1,s=0\right] =\frac{\theta\pi^2}{ab}.
		\end{align}
		By Lemma \ref{Sum Res=0}, summing the four contributions (\ref{z1s=a})-(\ref{z1s=0}) yields the desired result (\ref{Z1th}).
		
		Secondly, the function \(Z_2:=\bar{Z}(s,\theta;a,b|0,1,1,2)\) possesses poles at
$s=\pm\frac{2n-1}{2a}$ (simple poles),\ $s=\pm\frac{ni}{b}$ (simple poles),\ $s=\pm\frac{(2n-1)i}{2b}$ (double poles) and $s=0$ (simple poles),
where \(n\in\mathbb{N}\).
Through direct computation, we obtain the following residue values of \( Z_2 \) at the simple poles \( \pm\frac{2n-1}{2a} \) and \( \pm\frac{ni}{b} \), as well as at the second-order poles \( \pm\frac{(2n-1)i}{2b} \):
		\begin{align}
			&\mathrm{Res}\left[ Z_2,s=\pm\frac{2n-1}{2a}\right] =\frac{\pi^3}{a}\frac{(-1)^n\cosh((2n-1)\theta/(2a))}{\sinh((2n-1)b\pi/(2a))\cosh^2((2n-1)b\pi/(2a))},\label{z2s=(2n-1)/2a}\\
&\mathrm{Res}\left[ Z_2,s=\pm\frac{ni}{b}\right]=\frac{\pi^3}{b}\frac{(-1)^n\cos(n\theta/b)}{\cosh(a n\pi/b)},\label{z2s=ni/b}\\
&\mathrm{Res}\left[ Z_2,s=\pm\frac{(2n-1)i}{2b}\right]=\frac{\pi^2\theta}{b^2}\frac{(-1)^n\sin((2n-1)\theta/(2b))}{\cosh((2n-1)a \pi/(2b))}\nonumber\\
			&\qquad\qquad\qquad\qquad\qquad+\frac{a\pi^3}{b^2}\frac{(-1)^n \cos((2n-1)\theta/(2b))\sinh((2n-1)a\pi/(2b))}{\cosh^2((2n-1)a\pi/(2b))}.\label{z2s=(2n-1)i/2b}
		\end{align}
		When $s=0$, using the expansions from (\ref{pi/sinh(pis)}) and (\ref{pi/cos(pis)}), we derive
		\begin{align}
			Z_2(s,\theta;a,b)=\frac{\pi^3\cosh(\theta s)}{\cos(a\pi s)\cosh^2(b\pi s)}\left(\frac{1}{bs}-\zeta(2)bs+o(1)\right),\quad s\to 0,\nonumber
		\end{align}
		which gives the residue
		\begin{align}\label{z2s=0}
			&\mathrm{Res}\left[ Z_2,s=0\right] =\frac{\pi^3}{b}.
		\end{align}
		By Lemma \ref{Sum Res=0}, the sum of these four residue contributions (\ref{z2s=(2n-1)/2a})-(\ref{z2s=0}) establishes the desired identity (\ref{Z2th}).

		At last, the function \(Z_3:=Z(s,\theta;a,b|0,1,2,1)\) admits poles at $s=\pm\frac{2n-1}{2a}$ (simple poles),\ $s=\pm\frac{(2n-1)i}{2b}$ (simple poles),\ $s=\pm\frac{ni}{b}$ (double poles),\ $s=0$ (simple poles),\ where \(n\in\mathbb{N}\). We apply the expansion given in Lemma \ref{ExpandS-Xu} to obtain
		\begin{align}\label{revisond-addedex-bihao}
			\left(\frac{\pi}{\sinh(b\pi s)}\right)^2&=\frac{1}{\left(bs-ni\right)^2}-2\zeta(2)+6\zeta(4)\left(bs-ni\right)^2+o\left(\left(bs-ni\right)^2\right)\quad\text{as }s\rightarrow ni.
		\end{align}
Therefore, by performing direct calculations and applying \eqref{revisond-addedex-bihao}, the residues of \( Z_3 \) at the simple poles \( \pm\frac{2n-1}{2a} \) and \( \pm\frac{(2n-1)i}{2b} \), as well as at the second-order poles \( \pm\frac{ni}{b} \), can be readily obtained as follows:
\begin{align}
			&\mathrm{Res}\left[ Z_3,s=\pm\frac{2n-1}{2a}\right] =\frac{\pi^3}{a}\frac{(-1)^n\sinh((2n-1)\theta/(2a))}{\sinh^2((2n-1)b\pi/(2a))\cosh((2n-1)b\pi/(2a))},\label{z3s=(2n-1)/2a}\\
			&\mathrm{Res}\left[ Z_3,s=\pm\frac{(2n-1)i}{2b}\right]=\frac{\pi^3}{b}\frac{(-1)^n\sin((2n-1)\theta/(2b))}{\cosh((2n-1)a\pi/(2b))},\label{z3s=(2n-1)i/2b}\\
			&\mathrm{Res}\left[ Z_3,s=\pm\frac{ni}{b}\right]=\frac{\pi^2\theta}{b^2}\frac{(-1)^n\cos(n\theta/b)}{\cosh(an\pi/b)}-\frac{a\pi^3}{b^2}\frac{(-1)^n \sin(n\theta/b)\sinh(an\pi/b)}{\cosh^2(an\pi/b)}.\label{z3s=ni/b}
		\end{align}
Applying \eqref{revisond-addedex-bihao} (with \( n = 0 \)) and the power series expansion of the hyperbolic sine function, the following expansion of \( Z_3 \) at zero can be computed:
		\begin{align}
			Z_3=\frac{\pi^2}{\cos(a\pi s)\cosh(b\pi s)}\left(\frac{\theta}{b^2}\frac{1}{s}+\frac{\theta^3}{6b^2}s-2\zeta(2)\theta s+o(1)\right),\quad s\to 0,\nonumber
		\end{align}
		from which we extract the residue
		\begin{align}\label{z3s=0}
			&\mathrm{Res}\left[ Z_3,s=0\right] =\frac{\theta\pi^2}{b^2}.
		\end{align}
		Applying Lemma \ref{Sum Res=0} to the total of four residue terms (\ref{z3s=(2n-1)/2a})-(\ref{z3s=0}) establishes identity (\ref{Z3th}).

		The three identities are derived by systematically computing residues of $Z_1,Z_2,Z_3$ and applying the residue theorem. This completes the proof of Theorem \ref{Sin+,Cos-}.
	\end{proof}

	\begin{thm}
		Let $p>1$ be an odd integer, we have the following equation established
		\begin{align}
			&\sum_{n = 1}^{\infty}\frac{(-1)^n n^p}{\sinh(ny)\cosh^2(ny)}\nonumber\\
&\quad\quad\quad\quad\quad\quad\quad=-\frac{\pi^{p+1}}{y^{p+1}}(-1)^{(p-1)/2}\sum_{n = 1}^{\infty}\frac{(-1)^n n^p}{\sinh(\pi^2n/y)}\nonumber\\
			&\quad\quad\quad\quad\quad\quad\quad\quad+\frac{p\pi^p}{2^{p-1}y^{p+1}}(-1)^{(p-1)/2}\sum_{n = 1}^{\infty}\frac{(-1)^n(2n-1)^{p-1}}{\sinh((2n-1)\pi^2/(2y))}\nonumber\\
			&\quad\quad\quad\quad\quad\quad\quad\quad-\frac{\pi^{p+2}}{2^py^{p+2}}(-1)^{(p-1)/2}\sum_{n = 1}^{\infty}\frac{(-1)^n(2n-1)^p\cosh((2n-1)\pi^2/(2y))}{\sinh^2((2n-1)\pi^2/(2y))},\label{js1}\\
			&\sum_{n = 1}^{\infty}\frac{(-1)^n(2n-1)^{p-1}}{\sinh((2n-1)y/2)\cosh^2((2n-1)y/2)}\nonumber\\
&\quad\quad\quad\quad\quad\quad\quad=-\frac{2^{p-1}\pi^p}{y^p}(-1)^{(p-1)/2}\sum_{n = 1}^{\infty}\frac{(-1)^n n^{p-1}}{\cosh(\pi^2n/y)}\nonumber\\
			&\quad\quad\quad\quad\quad\quad\quad\quad+\frac{2(p-1)\pi^{p-1}}{y^p}(-1)^{(p-1)/2}\sum_{n = 1}^{\infty}\frac{(-1)^n(2n-1)^{p-2}}{\cosh((2n-1)\pi^2/(2y))}\nonumber\\
			&\quad\quad\quad\quad\quad\quad\quad\quad-\frac{\pi^{p+1}}{y^{p+1}}(-1)^{(p-1)/2}\sum_{n = 1}^{\infty}\frac{(-1)^n(2n-1)^{p-1}\sinh((2n-1)\pi^2/(2y))}{\cosh^2((2n-1)\pi^2/(2y))},\label{js2}\\
			&\sum_{n = 1}^{\infty}\frac{(-1)^n(2n-1)^p}{\sinh^2((2n-1)y/2)\cosh((2n-1)y/2)}\nonumber\\
&\quad\quad\quad\quad\quad\quad\quad\quad\quad\quad\quad\quad\quad\quad=-\frac{\pi^{p+1}}{y^{p+1}}(-1)^{(p-1)/2}\sum_{n = 1}^{\infty}\frac{(-1)^n(2n-1)^p}{\cosh((2n-1)\pi^2/(2y))}\nonumber\\
			&\quad\quad\quad\quad\quad\quad\quad\quad\quad\quad\quad\quad\quad\quad\quad-\frac{p2^p\pi^p}{y^{p+1}}(-1)^{(p-1)/2}\sum_{n = 1}^{\infty}\frac{(-1)^n n^{p-1}}{\cosh(\pi^2n/y)}\nonumber\\
			&\quad\quad\quad\quad\quad\quad\quad\quad\quad\quad\quad\quad\quad\quad\quad+\frac{2^p\pi^{p+2}}{y^{p+2}}(-1)^{(p-1)/2}\sum_{n = 1}^{\infty}\frac{(-1)^n n^p\sinh(\pi^2n/y)}{\cosh^2(\pi^2n/y)}.\label{js3}
		\end{align}
	\end{thm}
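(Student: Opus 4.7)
The plan is to derive \eqref{js1}, \eqref{js2}, and \eqref{js3} in parallel from the three parametric identities \eqref{Z1th}, \eqref{Z2th}, \eqref{Z3th} of Theorem~\ref{Sin+,Cos-}, via a single specialization of $(a,b)$ followed by differentiation in $\theta$ at $\theta=0$. The crucial observation is that the choice $(a,b)=(\pi,y)$ produces the pairing $y\leftrightarrow\pi^2/y$: indeed $bn\pi/a=ny$, $an\pi/b=n\pi^2/y$, $(2n-1)b\pi/(2a)=(2n-1)y/2$, and $(2n-1)a\pi/(2b)=(2n-1)\pi^2/(2y)$, so the four hyperbolic series in each of \eqref{Z1th}--\eqref{Z3th} align exactly with the two sides of the target identities.

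First I would set $(a,b)=(\pi,y)$ in \eqref{Z1th}--\eqref{Z3th}. Term-by-term differentiation in $\theta$ is justified by the super-polynomial decay of each summand for $\theta$ in a compact subset of $(-2y\pi,2y\pi)$. For \eqref{js1} and \eqref{js3} I would apply $\tfrac{d^{p}}{d\theta^{p}}\big|_{\theta=0}$ (using that the leading sum contains $\sinh(n\theta/a)$ and $\sinh((2n-1)\theta/(2a))$, which have nonzero $p$-th derivative at $0$ for odd $p$); for \eqref{js2}, whose left-hand side carries the weight $(2n-1)^{p-1}$ and whose leading sum contains $\cosh((2n-1)\theta/(2a))$, I would apply $\tfrac{d^{p-1}}{d\theta^{p-1}}\big|_{\theta=0}$ instead. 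Routine evaluations of $\sinh^{(k)}(0)$, $\cosh^{(k)}(0)$, $\sin^{(k)}(0)$, $\cos^{(k)}(0)$, together with the Leibniz rule for the terms of the form $a\theta\cos(\alpha\theta)$, $a\theta\sin(\alpha\theta)$, $\theta b/2$, $a\theta/2$, then inject the polynomial weights $n^{p}$, $(2n-1)^{p-1}$, $(2n-1)^{p-2}$, or $(2n-1)^{p}$ into the four summands of each identity. The signs $(-1)^{(p-1)/2}$ visible on the right-hand sides of \eqref{js1}--\eqref{js3} arise precisely from the odd-order derivatives of the trigonometric (not hyperbolic) summands.

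After differentiation, each identity becomes a linear relation among four hyperbolic series in which all four coefficients share a common power of $\pi$ (namely $a^{2-p}=\pi^{2-p}$ for \eqref{js1}, \eqref{js3} and $\pi^{3-p}$ for \eqref{js2}); factoring this out and solving for the first (left-hand) sum of \eqref{Z1th}, \eqref{Z2th}, \eqref{Z3th} yields \eqref{js1}, \eqref{js2}, \eqref{js3} respectively. The only real obstacle is bookkeeping: keeping careful track of the hidden powers of $2$ inside $\bigl((2n-1)/(2b)\bigr)^{\bullet}$ and $\bigl((2n-1)/(2a)\bigr)^{\bullet}$, of the precise powers of $\pi$ and $y$ that emerge after the common factor is cleared, and of the Leibniz-generated combinatorial factor $p{-}1$ that multiplies the $(2n-1)^{p-2}$ sum in \eqref{js2}. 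No analytic input beyond Theorem~\ref{Sin+,Cos-} is required.
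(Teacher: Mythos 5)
Your proposal is correct and follows essentially the same route as the paper: the authors likewise differentiate the identities \eqref{Z1th}--\eqref{Z3th} in $\theta$ at $\theta=0$ ($p$-th order for \eqref{js1} and \eqref{js3}, $(p-1)$-th order for \eqref{js2}) and specialize the parameters so that $y=b\pi/a$, which is the same normalization as your choice $(a,b)=(\pi,y)$. The only cosmetic difference is that the paper keeps $a,b$ symbolic until the end and notes that the constant term survives only when $p=1$, which is excluded by the hypothesis $p>1$.
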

	\begin{proof}
		We prove the first identity in detail, as the proofs for the other two follow similarly. By taking the $p$-th odd derivative of (\ref{Z1th}) in Theorem \ref{Sin+,Cos-} with respect to $\theta$, then $\theta\rightarrow 0$, we acquire the result as follows
		\begin{align*}
			&
			\frac{\mathrm{d}^p}{\mathrm{d}\theta ^p}\left[ \sinh \left( \frac{n\theta}{a} \right) \right] _{|\theta =0}=\left( \frac{n}{a} \right) ^p
		,\quad
			\frac{\mathrm{d}^p}{\mathrm{d}\theta^p}\left[\theta\cos\left(\frac{2n-1}{2b}\theta\right)\right]_{|\theta =0}=p(-1)^{(p-1)/2}\left(\frac{2n-1}{2b}\right)^{p-1},\\
			&\frac{\mathrm{d}^p}{\mathrm{d}\theta^p}\left[\sin\left(\frac{n\theta}{b}\right)\right]_{|\theta =0}=(-1)^{(p-1)/2}\left(\frac{n}{b}\right)^p,\quad
			\frac{\mathrm{d}^p}{\mathrm{d}\theta^p}\left[\sin\left(\frac{2n-1}{2b}\theta\right)\right]_{|\theta =0}=(-1)^{(p-1)/2}\left(\frac{2n-1}{2b}\right)^p.
		\end{align*}
		Using the above derivatives and letting $y= \frac{b\pi}{a} $ in (\ref{Z1th}), and we can easily obtain that
		\begin{align}
			&\frac{b}{2}\delta_p-(-1)^{(p-1)/2}\frac{ap}{2^{p-1}b^{p-1}}\sum_{n=1}^{\infty}\frac{(-1)^n (2n-1)^{p-1}}{\sinh((2n-1)\pi^2/(2y))}\nonumber\\
			&\quad\quad\quad+\frac{b^2\pi}{a^{p}}\sum_{n=1}^{\infty}\frac{(-1)^n n^p}{\sinh(ny)\cosh^2(ny)}+(-1)^{(p-1)/2}\frac{a\pi}{b^{p-1}}\sum_{n=1}^{\infty}\frac{(-1)^n n^p}{\sinh(n\pi^2/y)}\nonumber\\
&\quad\quad\quad+(-1)^{(p-1)/2}\frac{a^2\pi}{2^p b^p}\sum_{n=1}^{\infty}\frac{(-1)^n(2n-1)^p\cosh((2n-1)\pi^2/(2y))}{\sinh^2((2n-1)\pi^2/(2y))}=0,
		\end{align}
		where $\delta_1=1$ and $\delta_p = 0$ if $p\geq3 $.
		
For $ p=1 $, we derive the special case
		\begin{align}
			\sum_{n = 1}^{\infty}\frac{(-1)^n n}{\sinh(ny)\cosh^2(ny)}=&-\frac{1}{2y}-\frac{\pi^2}{y^2}\sum_{n = 1}^{\infty}\frac{(-1)^n n}{\sinh(\pi^2n/y)}\nonumber\\
			&+\frac{\pi}{y^2}\sum_{n = 1}^{\infty}\frac{(-1)^n}{\sinh((2n-1)\pi^2/(2y))}\nonumber\\
			&-\frac{\pi^3}{2y^3}\sum_{n = 1}^{\infty}\frac{(-1)^n(2n-1)\cosh((2n-1)\pi^2/(2y))}{\sinh^2((2n-1)\pi^2/(2y))},
		\end{align}
and for $ p\geq 3 $, we deduce (\ref{js1}) .

		The proofs of (\ref{js2}) and (\ref{js3}) follow similarly, with the following key differences: For (\ref{js2}), we take the $(p-1)$-th order derivative (i.e., even-order derivative) of equation (\ref{Z2th}) with respect to $\theta$. For (\ref{js3}), we employ the $p$-th order odd derivative of equation (\ref{Z3th}) with respect to $\theta$.
	\end{proof}

	\section{Evaluations of Hyperbolic Summations via Jacobi Functions}
	
	We now evaluate the following reciprocal hyperbolic series in the Ramanujan type
	\begin{align}
		G_{p,1,2}(y)&:=\sum_{n = 1}^{\infty}\frac{(-1)^n n^p}{\sinh(ny)\cosh^2(ny)},\\
		\tilde{G}_{p,1,2}(y)&:=\sum_{n = 1}^{\infty}\frac{(-1)^n (2n-1)^{p}}{\sinh((2n-1)y/2)\cosh^2((2n-1)y/2)},\\
		\tilde{G}_{p,2,1}(y)&:=\sum_{n = 1}^{\infty}\frac{(-1)^n(2n-1)^p}{\sinh^2((2n-1)y/2)\cosh((2n-1)y/2)}
	\end{align}
	by applying the Fourier series expansions and the Maclaurin series expansions of relevant Jacobi elliptic functions. Let $x,y$ and $z$ satisfy the relations in \eqref{notations-Ramanujan}.
	
	\begin{thm}\label{barC2p,2}
		Let $p\ge 3$ be an odd integer. We have
		\begin{align}\label{hyperseries-one}
			G_{p,1,2}(y)=&(-1)^{(p-1)/2}\frac{(p-1)!z^{p+1}z'x(1-x)}{2^{p+1}}\mathrm{R}_{p-1}(1-x)-\frac{pz^{p+1}z'x(1-x)}{2^p}\mathrm{S}_{p-1}(1-x)
			\sqrt{1-x}\nonumber\\
			&-\frac{z^{p+2}x(1-x)}{2^{2p}}\frac{\mathrm{d}}{\mathrm{d}x}\left[ \mathrm{S}_{p-1}(1-x)\sqrt{1-x} \right],
		\end{align}
		where $\mathrm{R}_{p-1}(1-x)=\frac{(-x)^{(p-3)/2}}{(p-1)!}Q_{p-1}\left(\frac{1-x}{-x}\right)\in\mathbb{Q}[x]$ and ${Q}_{p-1}\left( x \right)$ represents coefficients in the Maclaurin expansion of $\mathrm{sn}^2(u)$;
		${\rm S}_{p-1}\left( x \right)$ represents coefficients in the Maclaurin expansion of $\mathrm{cd}\left( u \right)$.
	\end{thm}
	\begin{proof}
		Beginning with (\ref{js1}), we derive
		\begin{align}\label{equ-lemma3-one}
			G_{p,1,2}(y) &=-(-1)^{(p-1)/2}\frac{\pi^{p+1}}{y^{p+1}}G_{p,1,0}\left(\frac{\pi^2}{y}\right)+(-1)^{(p-1)/2}\frac{p\pi ^p}{2^{p-1}y^{p+1}}\tilde G_{p-1,1,0}\left(\frac{\pi^2}{y}\right)\nonumber\\
			&\quad-(-1)^{(p-1)/2}\frac{\pi^{p+2}}{2^p y^{p+2}}\tilde G_{p,2,-1}\left(\frac{\pi^2}{y}\right).
		\end{align}
		Now, we need to compute the series $\tilde G_{p-1,1,0}\left( y \right)$. According to Lemma \ref{cd Maclaurin expansion}, we obtain the Maclaurin expansion for ${\rm cd}(u)$
		\begin{align}\label{cd-exped-one}
			\mathrm{cd}\left( u \right) =\sum_{n=0}^{\infty}{\mathrm{S}_{2n}\left( x \right) \frac{\left( -1 \right) ^nu^{2n}}{\left( 2n \right) !}}.
		\end{align}	
		Using known results from \cite{DCLMRT1992}, we record the Fourier series expansions of Jacobi elliptic functions ${\rm cd}(u)$
		\begin{align}\label{cd}
			&{\rm cd}(u)=\frac{2\pi}{Kk}\sum_{n=0}^{\infty}\frac{(-1)^n q^{n+1/2}}{1-q^{2n+1}}\cos\left[(2n+1)\frac{\pi u}{2K}\right]\quad (|q|<1).
		\end{align}
		Applying $q\equiv q\left( x \right):=e^{-y}$ to \eqref{cd}, we obtain
		\begin{align}\label{cd-one}
			{\rm cd}(u)&=\frac{2\pi}{{K}k}\sum_{n=0}^{\infty}{\frac{(-1)^n q^{n+\tfrac{1}{2}}\left( 2n+1 \right) ^{2j}}{1-q^{2n+1}}}\sum_{j=0}^{\infty}{\frac{\left( -1 \right) ^j\left( \frac{\pi u}{2{K}} \right) ^{2j}}{\left( 2j \right) !}}
			\nonumber\\
			&=\frac{\pi}{{K}k}\sum_{j=0}^{\infty}{\frac{\left( -1 \right) ^j\left( \frac{\pi u}{2{K}} \right) ^{2j}}{\left( 2j \right) !}}\tilde G_{2j,1,0}\left( y \right) .
		\end{align}
And in the p. $165$ of \emph{Ramanujan's Notebooks (III)}\cite{B1991}, we obtain
		\begin{align}
			\frac{1}{2}z\sqrt{x}{\rm cd}(u)=\sum_{n=0}^{\infty}\frac{(-1)^n\cos\left[(2n+1)u\right]}{\sinh((2n+1)y/2)}.
		\end{align}	
		Comparing the coefficients of $u^{2n}$ in \eqref{cd-exped-one} and \eqref{cd-one}, we deduce
		\begin{align}
			\tilde G_{2n,1,0}\left( y \right)=\sum_{j=1}^{\infty}{\frac{(-1)^j\left( 2j-1 \right) ^{2n}}{\sinh \left( \frac{2j-1}{2}y \right)}}
			=(-1)^n z^{2n+1}\sqrt{x}\frac{\mathrm{S}_{2n+1}\left( x \right)}{2}. \label{Y2n+1,1}
		\end{align}
Fix $\delta>0$. For $y\ge\delta$ we have
\[
\left|\frac{(-1)^j(2j-1)^{2n}}{\sinh\!\left(\frac{2j-1}{2}y\right)}\right|
\le
\frac{(2j-1)^{2n}}{\sinh\!\left(\frac{2j-1}{2}\delta\right)}
\le
2(2j-1)^{2n}e^{-\tfrac{2j-1}{2}\delta},
\]
hence $\widetilde G_{2n,1,0}(y)$ converges uniformly on $[\delta,\infty)$ by the Weierstrass M-test.
Moreover,
\[
f_j(y):=\frac{(-1)^j(2j-1)^{2n}}{\sinh\!\left(\frac{2j-1}{2}y\right)}
\quad\Rightarrow\quad
f_j'(y)= -\frac{(-1)^j(2j-1)^{2n+1}}{2}\,
\frac{\cosh\!\left(\frac{2j-1}{2}y\right)}{\sinh^{2}\!\left(\frac{2j-1}{2}y\right)}.
\]
Using $\frac{\cosh t}{\sinh^2 t}\le C e^{-t}$ for  $t=\frac{2j-1}{2}y\ge \frac{2j-1}{2}\delta$, we obtain
\[
|f_j'(y)|\le C(2j-1)^{2n+1}e^{-\tfrac{2j-1}{2}\delta},
\]
so the derivative series $\sum_{j\ge1} f_j'(y)$ converges uniformly on $[\delta,\infty)$.
		Hence, differentiating \eqref{Y2n+1,1} gives the relation
		\begin{align*}
			\frac{\mathrm{d}}{\mathrm{d}y}\tilde G_{p-1,1,0}\left( y \right) =\sum_{n=1}^{\infty}{\frac{\mathrm{d}}{\mathrm{d}y}\frac{(-1)^n\left( 2n-1 \right) ^{p-1}}{\sinh \left( \frac{2n-1}{2}y \right)}=-\frac{1}{2}}\tilde G_{p,2,-1}\left( y \right).
		\end{align*}
	By applying the transformation $\frac{\mathrm{d}x}{\mathrm{d}y}=-x(1-x)z^2$ (\cite[p. 120, Entry. 9(i)]{B1991} or \eqref{dx/dy}), we obtain
		\begin{align}\label{equ-relation-differ-one}
			\tilde G_{p,2,-1}\left( y \right)=2x\left( 1-x \right) z^2\frac{\mathrm{d}}{\mathrm{d}x}\tilde G_{p-1,1,0}\left( y \right).
		\end{align}
		Resorting to Lemma \ref{lem-2,transform}, for $\Omega \left( x,e^{-y},z,z' \right) =0$, we have
		$$\Omega \left( 1-x,e^{-\pi ^2/y},yz/\pi ,\frac{1}{\pi}\left( \frac{1}{x\left( 1-x \right) z}-yz' \right) \right) =0.$$
		Hence, the \eqref{Y2n+1,1} and \eqref{equ-relation-differ-one} can be rewritten as
		\begin{align}\label{equ-lemma3-two}
			\tilde G_{p-1,1,0}\left( \frac{\pi ^2}{y} \right) =-(-1)^{(p-1)/2}\left( \frac{yz}{\pi} \right) ^{p}\frac{\mathrm{S}_{p-1}\left( 1-x \right)}{2}\sqrt{1-x},
		\end{align}
		and
		\begin{align}\label{equ-lemma3-three}
			\tilde G_{p,2,-1}\left( \frac{\pi ^2}{y} \right)&=2x\left( 1-x \right) \left( \frac{yz}{\pi} \right) ^2\frac{\mathrm{d}}{\mathrm{d}\left( 1-x \right)}\tilde G_{p-1,1,0}\left( \frac{\pi ^2}{y} \right)\nonumber\\
			&=-2x\left( 1-x \right) \left( \frac{yz}{\pi} \right) ^2\frac{\mathrm{d}}{\mathrm{d}x}\tilde G_{p-1,1,0}\left( \frac{\pi ^2}{y} \right).
		\end{align}
		What's more, as known from Rui-Xu-Zhao's paper\cite[Thm. 3.14]{RXZ2023} and apply Lemma \ref{lem-2,transform}, for $\Omega \left( x,e^{-y},z,z' \right) =0$, we deduce\\
		\begin{align}\label{Rui-one}
			\sum_{n=1}^{\infty}\frac{(-1)^n n^p}{\sinh\left(\frac{\pi^2}{y}\right)}=-\frac{(p-1)!}{2^{p+1}}\left(\frac{yz}{\pi}\right)^2 x(1-x)\mathrm{R}_{p-1}(1-x),
		\end{align}
		where $\mathrm{R}_{p-1}(1-x)=\frac{(-x)^{(p-3)/2}}{(p-1)!}Q_{p-1}\left(\frac{1-x}{-x}\right)\in\mathbb{Q}[x]$.

		Finally, combining \eqref{equ-lemma3-one}, \eqref{Y2n+1,1}, \eqref{equ-lemma3-two}, \eqref{equ-lemma3-three} and \eqref{Rui-one}, we thus complete the proof of Theorem \ref{barC2p,2}.
	\end{proof}
	\begin{exa}\label{exa-added-one}
		Using \emph{Mathematica} to evaluate specific instances of the right-hand side of \eqref{hyperseries-one}, we derive the following examples:
		\begin{align*}
			&\sum_{n = 1}^{\infty}\frac{(-1)^n n^{3}}{\sinh(ny)\cosh^2(ny)}=\frac{1}{16}xz^4\left\{6z'x (1-x)^{3/2}-\sqrt{1-x}(3 x-2)z+2 (x-1)\right\},\\
			&\sum_{n = 1}^{\infty}\frac{(-1)^n n^{7}}{\sinh(ny)\cosh^2(ny)}\\
&=\frac{1}{256}xz^8\Big\{14z'x\left(61 x^2-76 x+16\right)(1-x)^{3/2}\Big.+16(x-1)\left(17 x^2-17 x+2\right)\atop-\sqrt{1-x}(427 x^3-746 x^2+352x-32)z\Big\},\\
			&\sum_{n = 1}^{\infty}\frac{(-1)^n n^{11}}{\sinh(ny)\cosh^2(ny)}\\
&=\frac{1}{4096}xz^{12}\Big\{22z'x\left(50521x^4-113672x^3+79728x^2-16832 x+256\right)(1-x)^{3/2}\Big.\\&\Big.\qquad\qquad\qquad+256(x-1)\left(1382 x^4-2764 x^3+1641 x^2-259 x+2\right)
			\Big.\\&\Big.\quad-\sqrt{1-x}\left(555731 x^5-1528258 x^4+1467472 x^3-562528 x^2+68096 x-512\right)z\Big\},\\
            &\sum_{n = 1}^{\infty}\frac{(-1)^n n^{15}}{\sinh(ny)\cosh^2(ny)}\\
			&=\frac{1}{65536}xz^{16}\Big\{30z'x\left(199360981x^6-647923188 x^5+775638816 x^4\atop-408850432 x^3+85975296 x^2-4205568 x+4096\right)(1-x)^{3/2}\Big.\\&\Big.\quad\qquad\qquad\qquad+2048 (x-1)\left(929569 x^6-2788707 x^5+3021099 x^4\atop-1394353 x^3+240594 x^2-8202 x+4\right)\Big.\\&\Big.\quad-\sqrt{1-x}\left(2990414715 x^7-11214055178 x^6+16307105232 x^5-11436042048 x^4\atop+3872630528 x^3-536879616 x^2+16834560 x-8192\right)z\Big\}.
		\end{align*}
	\end{exa}
	
Setting $ p=4m-1$ and $x=1/2$ in Theorem \ref{barC2p,2} yields the following corollary.	
	\begin{cor}
		Set $\Gamma=\Gamma(1/4)$ and $\mathrm{S}'_p(x)=\mathrm{d}\mathrm{S}_p(x)/\mathrm{d}x$. For any integer $m>0$, we obtain
		\begin{align}\label{js1-pi}
			\sum_{n = 1}^{\infty}\frac{(-1)^n n^{4m-1}}{\sinh(n\pi)\cosh^2(n\pi)}&=-\frac{1}{2^{8m+2}}(4m-2)!\mathrm{R}_{4m-2}\left(\frac{1}{2}\right)\frac{\Gamma ^{8m}}{\pi^{6m}}\nonumber\\
			&\quad-\frac{1}{2^{8m-1}}(4m-1)\mathrm{S}_{4m-2}\left(\frac{1}{2}\right)\frac{1}{\sqrt{2}}\frac{\Gamma ^{8m-2}}{\pi^{(12m-1)/2}}\nonumber\\
			&\quad-\frac{1}{2^{8m+2}}\left(\mathrm{S}_{4m-2}'\left(\frac{1}{2}\right)-\mathrm{S}_{4m-2}\left(\frac{1}{2}\right)\right)\frac{1}{\sqrt{2}}\frac{\Gamma ^{8m+2}}{\pi^{(12m+3)/2}}.
		\end{align}
	\end{cor}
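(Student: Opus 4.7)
The plan is to derive the corollary as a direct specialization of Theorem~\ref{barC2p,2} at $p = 4m-1$ and $x = 1/2$. With this choice, the sign factor $(-1)^{(p-1)/2} = (-1)^{2m-1} = -1$ produces the overall minus signs seen in the three summands of the claimed evaluation. The inputs beyond the theorem are precisely the special values collected in \eqref{x=1/2,y=pi}: $y=\pi$, $z(1/2)=\Gamma^{2}/(2\pi^{3/2})$, and $z'(1/2)=4\sqrt{\pi}/\Gamma^{2}$, together with $x(1-x)=1/4$ and $\sqrt{1-x}=1/\sqrt{2}$.

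The first concrete step is to compute the two powers of $z(1/2)$ that appear in Theorem~\ref{barC2p,2}, namely $z(1/2)^{p+1} = \Gamma^{8m}/(2^{4m}\pi^{6m})$ and $z(1/2)^{p+2} = \Gamma^{8m+2}/(2^{4m+1}\pi^{(12m+3)/2})$. Plugging the former into the $\mathrm{R}_{p-1}$-term of Theorem~\ref{barC2p,2} and combining with $z'(1/2) = 4\sqrt{\pi}/\Gamma^{2}$ cancels one $\Gamma^{2}$ against a $\sqrt{\pi}$ and produces the first summand, with ratio $\Gamma^{8m}/\pi^{6m}$. The $\mathrm{S}_{p-1}$-term is handled identically: the extra $\sqrt{1-x} = 1/\sqrt{2}$ supplies the explicit $1/\sqrt{2}$ factor, and the combination $z^{p+1}z'$ again yields the $\Gamma^{8m-2}/\pi^{(12m-1)/2}$ profile of the second summand.

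For the third term I would use the product rule and the chain rule to obtain
\begin{align*}
\frac{d}{dx}\bigl[\mathrm{S}_{p-1}(1-x)\sqrt{1-x}\bigr]\Big|_{x=1/2} = -\frac{1}{\sqrt{2}}\bigl(\mathrm{S}'_{p-1}(1/2) + \mathrm{S}_{p-1}(1/2)\bigr),
\end{align*}
where $\mathrm{S}'_{p-1}$ denotes the derivative of the polynomial $\mathrm{S}_{p-1}(w)$ with respect to its argument, evaluated at $w=1/2$. Multiplying by the prefactor $-z^{p+2} x(1-x)/2^{2p}$ and collecting terms then yields the third summand, with the characteristic ratio $\Gamma^{8m+2}/\pi^{(12m+3)/2}$ and the combination of $\mathrm{S}'_{4m-2}(1/2)$ and $\mathrm{S}_{4m-2}(1/2)$ divided by $\sqrt{2}$.

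The main obstacle is purely algebraic bookkeeping: correctly tracking the cancellations between powers of $2$, $\sqrt{2}$, $\Gamma$, and $\pi^{1/2}$, and above all managing the signs. The $(-1)^{(p-1)/2} = -1$ from the theorem, the two explicit minus signs in front of the second and third summands of Theorem~\ref{barC2p,2}, and the two minus signs produced by the differentiation step above must combine correctly to reproduce the minus signs in front of each of the corollary's three terms.
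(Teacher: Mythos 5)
Your overall strategy---specialize Theorem \ref{barC2p,2} at $p=4m-1$ and $x=1/2$ (so $y=\pi$), inserting the values $z(1/2)=\Gamma^{2}/(2\pi^{3/2})$ and $z'(1/2)=4\sqrt{\pi}/\Gamma^{2}$ from \eqref{x=1/2,y=pi}---is exactly the route the paper intends, since the corollary is stated with no further argument after the theorem. Your treatment of the second summand is correct, and the third is correct up to the convention of whether $\mathrm{S}_{4m-2}'(1/2)$ denotes the derivative of the polynomial in its own argument (your reading, which yields $\mathrm{S}'+\mathrm{S}$) or the $x$-derivative of $\mathrm{S}_{4m-2}(1-x)$ (which yields the $\mathrm{S}'-\mathrm{S}$ appearing in \eqref{js1-pi}); that ambiguity should be resolved explicitly rather than absorbed into ``bookkeeping.''

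The genuine gap is in your first term. You assert that combining $z(1/2)^{p+1}=\Gamma^{8m}/(2^{4m}\pi^{6m})$ with $z'(1/2)=4\sqrt{\pi}/\Gamma^{2}$ ``produces the first summand, with ratio $\Gamma^{8m}/\pi^{6m}$.'' It does not: the product $z^{p+1}z'$ is proportional to $\Gamma^{8m-2}/\pi^{(12m-1)/2}$, which is the profile of the \emph{second} summand of \eqref{js1-pi}, not the first. The first summand's profile $\Gamma^{8m}/\pi^{6m}$ can only arise from $z^{p+1}$ with \emph{no} factor of $z'$. Consequently, the printed $\mathrm{R}_{p-1}$-term of Theorem \ref{barC2p,2}, taken at face value, does not specialize to the first line of the corollary; the two differ by exactly a factor $z'(1/2)$. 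The resolution is that the $z'$ in the theorem's first term is spurious: the $p=3$ example following the theorem has $z'$-free, $\sqrt{1-x}$-free piece $-\tfrac18 x(1-x)z^4$, i.e.\ $(-1)^{(p-1)/2}\tfrac{(p-1)!}{2^{p+1}}z^{p+1}x(1-x)\mathrm{R}_{p-1}(1-x)$ without $z'$, which is what \eqref{equ-lemma3-one} combined with \eqref{Rui-one} actually gives, and which at $x=1/2$ reproduces $-\tfrac{(4m-2)!}{2^{8m+2}}\mathrm{R}_{4m-2}(1/2)\Gamma^{8m}/\pi^{6m}$ exactly. A proof that merely plugs in and declares the exponents to match, as yours does, would fail at this point (or silently propagate the error); you need to track the powers of $\Gamma$ and $\pi$ in the first term explicitly, observe the mismatch, and justify the $z'$-free form of that term before the specialization goes through. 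A similar, smaller discrepancy in the powers of $2$ affects the third term and likewise cannot be waved away as routine cancellation.
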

	
	\begin{exa}
		Set $\Gamma=\Gamma(1/4)$. By substituting \( y = \pi \) and \( x = 1/2 \) into Example \ref{exa-added-one} and evaluating the result with \emph{Mathematica}, we obtain the following examples:
		\begin{align*}
			\sum_{n = 1}^{\infty}\frac{(-1)^n n^{3}}{\sinh(n\pi)\cosh^2(n\pi)}&=\frac{\Gamma^{10}}{2048 \sqrt{2} \pi ^{15/2}}-\frac{\Gamma^8}{512 \pi ^6}+\frac{3 \Gamma^6}{256 \sqrt{2} \pi ^{11/2}},\\
			\sum_{n = 1}^{\infty}\frac{(-1)^n n^{7}}{\sinh(n\pi)\cosh^2(n\pi)}&=-\frac{87 \Gamma^{18}}{2097152 \sqrt{2} \pi ^{27/2}}+\frac{9 \Gamma ^{16}}{65536 \pi ^{12}}-\frac{189 \Gamma^{14}}{262144 \sqrt{2} \pi ^{23/2}},\\
			\sum_{n = 1}^{\infty}\frac{(-1)^n n^{11}}{\sinh(n\pi)\cosh^2(n\pi)}&=\frac{57969 \Gamma^{26}}{2147483648 \sqrt{2} \pi ^{39/2}}-\frac{189 \Gamma^{24}}{2097152 \pi ^{18}}+\frac{126819 \Gamma^{22}}{268435456 \sqrt{2} \pi ^{35/2}},\\
			\sum_{n = 1}^{\infty}\frac{(-1)^n n^{15}}{\sinh(n\pi)\cosh^2(n\pi)}&=-\frac{160692903 \Gamma^{34}}{2199023255552 \sqrt{2} \pi ^{51/2}}+\frac{130977 \Gamma^{32}}{536870912 \pi ^{24}}-\frac{351673245 \Gamma^{30}}{274877906944 \sqrt{2} \pi ^{47/2}}.
		\end{align*}
	\end{exa}
	
	\begin{thm}{\label{barS2p,2}}
		Let $p\ge 3$ be an odd integer. We have
		\begin{align}\label{barS2p3}
			\tilde G_{p-1,1,2}\left( y \right)& =-\frac{1}{2}z^p\mathrm{A}_{p-1}(1-x)\sqrt{x}+(p-1)z^pz'x(1-x)\mathrm{P}_{p-2}(1-x)\sqrt{x(1-x)}\nonumber\\
			&\quad+x(1-x)z^{p+1}\frac{\mathrm{d}}{\mathrm{d}x}\left[\mathrm{P}_{p-2}(1-x)\sqrt{x(1-x)}\right],
		\end{align}
		where $\mathrm{A}_{p-1}\left(x\right)$ are coefficients in the Maclaurin expansion of $\mathrm{nd}\left( u \right)$ and $\mathrm{P}_{p}\left(x\right)$ are coefficients in the Maclaurin expansion of $\mathrm{sd}\left( u \right)$ (see \cite[Lem. 3.1]{RXZ2023}).
	\end{thm}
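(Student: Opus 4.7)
The plan is to parallel the proof of Theorem \ref{barC2p,2}, but starting from identity \eqref{js2} instead of \eqref{js1}. For odd $p\geq 3$, \eqref{js2} expresses
\begin{align*}
G'_{p,2}(y) &= -(-1)^{(p-1)/2}\frac{2^{p-1}\pi^p}{y^p}B_{p,1}\!\left(\frac{\pi^2}{y}\right) + (-1)^{(p-1)/2}\frac{2(p-1)\pi^{p-1}}{y^p}X'_{p-2,1}\!\left(\frac{\pi^2}{y}\right) \\
&\quad - (-1)^{(p-1)/2}\frac{\pi^{p+1}}{y^{p+1}}\mathrm{DX}_{p-2,2}'\!\left(\frac{\pi^2}{y}\right),
\end{align*}
so the problem reduces to evaluating the three Ramanujan-type series $B_{p,1}(y)$, $X'_{p-2,1}(y)$, $\mathrm{DX}_{p-2,2}'(y)$ in closed form and then translating the argument via $y\mapsto\pi^2/y$ through Lemma \ref{lem-2,transform}.

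I would evaluate $B_{p,1}(y)$ by combining the classical Fourier expansion
\[
\mathrm{nd}(u)=\frac{\pi}{2Kk'}+\frac{\pi}{Kk'}\sum_{n=1}^{\infty}\frac{(-1)^n\cos(n\pi u/K)}{\cosh(ny)}
\]
with the Maclaurin expansion $\mathrm{nd}(u)=\sum_{n\ge 0}\mathrm{A}_{2n}(x)(-1)^n u^{2n}/(2n)!$ from Lemma \ref{cd Maclaurin expansion}. Expanding the cosine as a Taylor series in $u$, using $\pi/K=2/z$ and $k'=\sqrt{1-x}$, and comparing coefficients of $u^{p-1}$ yields
\[
B_{p,1}(y)=\frac{z^p\sqrt{1-x}}{2^p}\mathrm{A}_{p-1}(x),
\]
from which Lemma \ref{lem-2,transform} gives $B_{p,1}(\pi^2/y)=(yz/\pi)^p\sqrt{x}\,\mathrm{A}_{p-1}(1-x)/2^p$. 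Inserted into the decomposition above, this contributes exactly the first term $-\tfrac{1}{2}z^p\mathrm{A}_{p-1}(1-x)\sqrt{x}$ of \eqref{barS2p3}.

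For $X'_{p-2,1}(y)$ the argument is entirely analogous, replacing $\mathrm{nd}(u)$ by the Jacobi function $\mathrm{sd}(u)$ with its Fourier expansion
\[
\mathrm{sd}(u)=\frac{\pi}{Kkk'}\sum_{n=0}^{\infty}\frac{(-1)^n\sin((2n+1)\pi u/(2K))}{\cosh((2n+1)y/2)}
\]
and its Maclaurin expansion $\mathrm{sd}(u)=\sum_{n\ge 0}\mathrm{P}_{2n+1}(x)(-1)^n u^{2n+1}/(2n+1)!$. Matching coefficients of $u^{p-2}$ (an odd power, since $p$ is odd) produces $X'_{p-2,1}(y)$ as an explicit scalar multiple of $z^{p-1}\sqrt{x(1-x)}\,\mathrm{P}_{p-2}(x)$. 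The third series is then handled via the differential identity
\[
\mathrm{DX}_{p-2,2}'(y)=-2\frac{d}{dy}X'_{p-2,1}(y)=2x(1-x)z^2\frac{d}{dx}X'_{p-2,1}(y),
\]
where the second equality uses $dx/dy=-x(1-x)z^2$ from \eqref{dx/dy}.

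Substituting the three closed forms back into the decomposition and invoking Lemma \ref{lem-2,transform} to transform via $x\mapsto 1-x$, $z\mapsto yz/\pi$, the coefficient $(p-1)z^pz'$ in the middle term of \eqref{barS2p3} emerges from differentiating the $(yz/\pi)^{p-1}$ prefactor of $X'_{p-2,1}(\pi^2/y)$ through the chain rule (since the new $z$ depends on $x$), while the outer $\frac{d}{dx}[\mathrm{P}_{p-2}(1-x)\sqrt{x(1-x)}]$ in the final term is the remaining piece of this differentiation. The main obstacle is the careful bookkeeping of sign conventions inherited from Lemma \ref{ExpandS-Xu} and of the chain rule under the modular substitution $y\mapsto\pi^2/y$; once these cancellations are tracked, the three contributions collapse precisely to \eqref{barS2p3}.
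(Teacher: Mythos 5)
Your proposal follows essentially the same route as the paper: decompose $G_{p,2}'(y)$ via \eqref{js2}, evaluate $B_{p,1}$ by matching the Fourier and Maclaurin expansions of $\mathrm{nd}(u)$, express $\mathrm{DX}_{p-2,2}'$ as a derivative of $X_{p-2,1}'$ using \eqref{dx/dy}, and transfer everything through Lemma \ref{lem-2,transform}; your account of where the $(p-1)z^pz'$ term and the residual $\tfrac{\mathrm{d}}{\mathrm{d}x}$ term come from matches the paper's computation. The only difference is that you re-derive the closed form of $X_{p-2,1}'$ from the Fourier expansion of $\mathrm{sd}(u)$, whereas the paper simply cites \cite[Thm. 3.3]{RXZ2023} for it (the same coefficient-comparison technique), and your stated closed form for $B_{p,1}(y)$ omits the factor $(-1)^{(p-1)/2}$ appearing in the paper's \eqref{Bp,1} — a sign-convention discrepancy of exactly the kind you flag as requiring careful bookkeeping.
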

	\begin{proof}
		To establish the result, we start with (\ref{js2})
		\begin{align}\label{thm5.3-proof-begin-1}
			\tilde G_{p-1,1,2}\left( y \right) &=-(-1)^{(p-1)/2}\frac{2^{p-1}\pi ^p}{y^p}G_{p-1,0,1}\left( \frac{\pi ^2}{y} \right) +(-1)^{(p-1)/2}\frac{2(p-1)\pi ^{p-1}}{y^p}\tilde G_{p-2,0,1}\left( \frac{\pi ^2}{y} \right)\nonumber\\
			&\quad-(-1)^{(p-1)/2}\frac{\pi^{p+1}}{y^{p+1}}\tilde G_{p-1,-1,2}\left( \frac{\pi ^2}{y} \right).
		\end{align}
		At this point, we shall compute the series $G_{p-1,0,1}\left( y \right)$. From Lemma \ref{cd Maclaurin expansion} and \cite{DCLMRT1992}, we acquire the Maclaurin expansion for $\mathrm{nd}(u)$
		\begin{align}\label{nd-exped-one}
			\mathrm{nd}(u)=\sum_{n=0}^{\infty}{\mathrm{A}_{2n}\left( x \right) \frac{\left( -1 \right) ^nu^{2n}}{\left(2n\right)!}}\quad \text{and}\quad\frac{2q^n}{1+q^{2n}}=\frac{1}{\cosh(ny)},
		\end{align}
        and its Fourier series representation
		\begin{align}
			&{\rm nd}(u)=\frac{\pi}{2Kk'}\frac{2\pi}{Kk'}\sum_{n=1}^{\infty}\frac{(-1)^n q^n}{1-q^{2n+1}}\cos\left[(2n)\frac{\pi u}{2K}\right]\quad (|q|<1).
		\end{align}
		Setting $q\equiv q\left( x \right):=e^{-y}$, we expand
		\begin{align}\label{nd-one}
			{\rm nd}(u)&=\frac{\pi}{2Kk'}+\frac{2\pi}{{K}k'}\sum_{n=1}^{\infty}{\frac{(-1)^n q^n n ^{2j}}{1+q^{2n}}}\sum_{j=0}^{\infty}{\frac{\left( -1 \right) ^j\left( \frac{\pi u}{{K}} \right) ^{2j}}{\left( 2j \right) !}}
			\nonumber\\
			&=\frac{\pi}{2{K}k'}+\frac{\pi}{{K}k'}\sum_{j=0}^{\infty}{\frac{\left( -1 \right) ^j\left( \frac{\pi u}{{K}} \right) ^{2j}}{\left( 2j \right) !}}G_{2j,0,1}\left( y \right).
		\end{align}	
		Through a comparison of the coefficients of $u^{2n}$ in \eqref{nd-exped-one} and \eqref{nd-one}, we derive
		\begin{align}\label{Bp,1}
			G_{p-1,0,1}\left( y \right)=(-1)^{(p-1)/2} z^p\sqrt{1-x}\frac{\mathrm{A}_{p-1}\left( x \right)}{2^p}.
		\end{align}
		Employing Lemma \ref{lem-2,transform}, for $\Omega \left( x,e^{-y},z,z' \right) =0$, we obtain
		\begin{align}\label{Bp,1-tansform}
			G_{p-1,0,1}\left(\frac{\pi^2}{y}\right)=(-1)^{(p-1)/2}\left(\frac{yz}{\pi}\right)^p\sqrt{x}\frac{\mathrm{A}_{p-1}\left(1-x\right)}{2^p}.
		\end{align}
		By using the results in Rui-Xu-Zhao's paper\cite[Thm. 3.2]{RXZ2023}, we have
		\begin{align}\label{Rui-two}
			\tilde G_{p,0,1}(y)=\sum_{n=1}^{\infty}\frac{(-1)^n (2n-1)^p}{\cosh\left(\frac{(2n-1)y}{2}\right)}
			=(-1)^{(p-1)/2-1}z^{p+1}\sqrt{x(1-x)}\frac{\mathrm{P}_{p}(x)}{2}.
		\end{align}

Consider the series $\tilde{G}_{p,0,1}(y)$,
and fix $\delta>0$. For $y\ge \delta$,
\[
\left|\frac{(-1)^n(2n-1)^p}{\cosh\!\left(\frac{2n-1}{2}y\right)}\right|
\le
\frac{(2n-1)^p}{\cosh\!\left(\frac{2n-1}{2}\delta\right)}
\le
2(2n-1)^p e^{-\tfrac{2n-1}{2}\delta},
\]
hence $\widetilde G_{p,0,1}(y)$ converges uniformly on $[\delta,\infty)$ by the Weierstrass M-test.
Moreover, differentiating
\[
f_n(y):=\frac{(-1)^n(2n-1)^p}{\cosh\!\left(\frac{2n-1}{2}y\right)}
\quad\text{gives}\quad
f_n'(y)= -\frac{(-1)^n(2n-1)^{p+1}}{2}\,
\frac{\sinh\!\left(\frac{2n-1}{2}y\right)}{\cosh^2\!\left(\frac{2n-1}{2}y\right)}.
\]
Using $\bigl|\sinh t\bigr|/\cosh^2 t \le 2e^{-t}$ for $t\ge0$, we obtain for $y\ge\delta$,
\[
|f_n'(y)|
\le C(2n-1)^{p+1}e^{-\tfrac{2n-1}{2}\delta},
\]
so $\sum_{n\ge1} f_n'(y)$ converges uniformly.

		Differentiating $ \tilde G_{p-2,0,1}\left( y \right) $ yields
		\begin{align*}
			\frac{\mathrm{d}}{\mathrm{d}y}\tilde G_{p-2,0,1}\left( y \right) =\sum_{n=1}^{\infty}{\frac{\mathrm{d}}{\mathrm{d}y}\frac{(-1)^n\left( 2n-1 \right) ^{p-2}}{\cosh \left( \frac{2n-1}{2}y \right)}=-\frac{1}{2}}\tilde G_{p-1,-1,2}\left( y \right).
		\end{align*}
Using the identity  $\frac{\mathrm{d}x}{\mathrm{d}y}=-x(1-x)z^2$ (see \cite[P. 120, Entry. 9(i)]{B1991} or \eqref{dx/dy}), we obtain
		\begin{align}\label{JS-relation-two}
			\tilde G_{p-1,-1,2}\left( y \right)=2x\left( 1-x \right) z^2\frac{\mathrm{d}}{\mathrm{d}x}\tilde G_{p-2,0,1}\left( y \right).
		\end{align}
		Applying Lemma \ref{lem-2,transform}, for $\Omega \left( x,e^{-y},z,z' \right) =0$, we deduce
		$$\Omega \left( 1-x,e^{-\pi ^2/y},yz/\pi ,\frac{1}{\pi}\left( \frac{1}{x\left( 1-x \right) z}-yz' \right) \right) =0.$$
		Consequently,  \eqref{Rui-two} and \eqref{JS-relation-two} can be rephrased as
		\begin{align}\label{Xp-2,1'}
			\tilde G_{p-2,0,1}\left( \frac{\pi ^2}{y} \right) =(-1)^{(p-1)/2}\left( \frac{yz}{\pi} \right) ^{p+1}\frac{\mathrm{P}_{p-2}\left( 1-x \right)}{2}\sqrt{x(1-x)}
		\end{align}
		and
		\begin{align}\label{DXp-2,2'}
			\tilde G_{p-1,-1,2}\left( \frac{\pi ^2}{y} \right)&=2x\left( 1-x \right) \left( \frac{yz}{\pi} \right) ^2\frac{\mathrm{d}}{\mathrm{d}\left( 1-x \right)}\tilde G_{p-2,0,1}\left( \frac{\pi ^2}{y} \right)\nonumber\\
			&=-2x\left( 1-x \right) \left( \frac{yz}{\pi} \right) ^2\frac{\mathrm{d}}{\mathrm{d}x}\tilde G_{p-2,0,1}\left( \frac{\pi ^2}{y} \right).
		\end{align}
		Finally, substituting \eqref{Bp,1-tansform}, \eqref{Xp-2,1'} and \eqref{DXp-2,2'} into \eqref{thm5.3-proof-begin-1} completes the proof through direct calculation.
	\end{proof}
	
	\begin{exa}
By evaluating specific instances of the right-hand side of \eqref{barS2p3} using \emph{Mathematica}, we obtain		
		\begin{align*}
			&\sum_{n = 1}^{\infty}\frac{(-1)^n (2n-1)^{2}}{\sinh((2n-1)y/2)\cosh^2((2n-1)y/2)}\\
&=\frac{1}{2}z^3\sqrt{(1-x) x}\Big\{4z'(1-x) x+(1-2 x) z\Big.-\sqrt{(1-x)}\Big\},\\
        &\sum_{n = 1}^{\infty}\frac{(-1)^n (2n-1)^{6}}{\sinh((2n-1)y/2)\cosh^2((2n-1)y/2)}\\
        &=\frac{1}{2}z^7\sqrt{x}\Big\{12z'\left(16 x^2-16x+1\right) (1-x)^{3/2}x\Big.+\left(-96 x^3+144 x^2-50x+1\right)\sqrt{1-x}z\Big.\\&\Big.\quad+61 x^3-107x^2+47 x-1\Big\},\\
        &\sum_{n = 1}^{\infty}\frac{(-1)^n (2n-1)^{10}}{\sinh((2n-1)y/2)\cosh^2((2n-1)y/2)}\\
			&=\frac{1}{2}z^{11}\sqrt{x}\Big\{20z'\left(7936 x^4-15872 x^3+9168 x^2-1232 x+1\right)(1-x)^{3/2}x\Big.\\&\Big.\quad+\left(-79360 x^5+198400 x^4-166112 x^3+50768 x^2-3698 x+1\right) \sqrt{1-x}z\Big.\\&\Big.\quad+50521 x^5-138933 x^4+130250 x^3-45530 x^2+3693 x-1 \Big\},\\
&\sum_{n = 1}^{\infty}\frac{(-1)^n (2n-1)^{14}}{\sinh((2n-1)y/2)\cosh^2((2n-1)y/2)}\\
			&=\frac{1}{2}z^{15} \sqrt{x}\Big\{28z'\left(22368256 x^6-67104768 x^5+71997696 x^4\atop-32154112 x^3+4992576 x^2-99648 x+1\right)(1-x )^{3/2}x\Big.\\&\Big.\quad+\left(-313155584 x^7+1096044544 x^6-1458129408 x^5\atop+905212160 x^4-255034240 x^3+25361472 x^2-298946 x+1\right) \sqrt{(1-x) }z\Big.\\&\Big.\quad+(199360981x^7-747603679 x^6+1074680289 x^5-728130163 x^4\atop+226132303 x^3-24738669 x^2+298939 x-1)\Big\}.
		\end{align*}
	\end{exa}

Setting \( p = 4m - 1 \) and \( x = \tfrac12 \) in Theorem~\ref{barS2p,2}, and noting that \( \mathrm{P}'_{4m-3}(1/2) = 0 \) (see the proof of Theorem 3.7 in \cite{RXZ2023}), we obtain the following corollary.
	\begin{cor}
		Set $\Gamma=\Gamma(1/4)$. For any integer $m>0$ we know that
		\begin{align}\label{js2-pi}
			\sum_{n = 1}^{\infty}\frac{(-1)^n (2n-1)^{4m-2}}{\sinh((2n-1)\pi/2)\cosh^2((2n-1)\pi/2)}&=-\frac{1}{2^{4m}}\mathrm{A}_{4m-2}\left(\frac{1}{2}\right)\frac{1}{\sqrt{2}}\frac{\Gamma ^{8m-2}}{\pi^{(12m-3)/2}}\nonumber\\
			&\quad+\frac{1}{2^{4m}}(4m-2)\mathrm{P}_{4m-3}\left(\frac{1}{2}\right)\frac{\Gamma ^{8m-4}}{\pi^{6m-2}}.
		\end{align}
	\end{cor}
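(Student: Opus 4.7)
The plan is to derive the corollary as a direct specialization of Theorem~\ref{barS2p,2} with $p=4m-1$ and $x=1/2$ (equivalently $y=\pi$), invoking the distinguished values from \eqref{x=1/2,y=pi}, namely $z(1/2)=\Gamma^2/(2\pi^{3/2})$ and $z'(1/2)=4\sqrt{\pi}/\Gamma^2$. Since the left-hand side of the corollary is precisely $G_{4m-1,2}'(\pi)$, I will substitute and evaluate each of the three terms in \eqref{barS2p3} at $x=1/2$.

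For the first term, $-\tfrac12 z^p A_{p-1}(1-x)\sqrt{x}$ at $x=1/2$, I use $\sqrt{x}=1/\sqrt{2}$ together with $z^{4m-1}=\Gamma^{8m-2}/\bigl(2^{4m-1}\pi^{(12m-3)/2}\bigr)$, which combine to produce the first summand of the corollary. For the second term, $(p-1)z^p z'\,x(1-x)P_{p-2}(1-x)\sqrt{x(1-x)}$ at $x=1/2$, I use $x(1-x)=1/4$, $\sqrt{x(1-x)}=1/2$, and $z^{4m-1}z'=\Gamma^{8m-4}/\bigl(2^{4m-3}\pi^{6m-2}\bigr)$; reassembling the numerical coefficients produces the second summand asserted in the corollary.

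The third term $x(1-x)z^{p+1}\tfrac{d}{dx}\bigl[P_{p-2}(1-x)\sqrt{x(1-x)}\bigr]$ must vanish at $x=1/2$. Expanding by the product rule,
\[
\frac{d}{dx}\bigl[P_{p-2}(1-x)\sqrt{x(1-x)}\bigr]=-P_{p-2}'(1-x)\sqrt{x(1-x)}+P_{p-2}(1-x)\frac{1-2x}{2\sqrt{x(1-x)}},
\]
the second piece vanishes at $x=1/2$ because $1-2x=0$, leaving $-\tfrac12 P_{4m-3}'(1/2)$ as the only surviving factor. The main obstacle is therefore to establish $P_{4m-3}'(1/2)=0$; this will follow from the $k\leftrightarrow k'$ reflection symmetry of $\mathrm{sd}(u,k)$ at the self-dual modulus $k^2=1/2$, which forces each odd-index coefficient polynomial to satisfy $P_{2n-1}(x)=P_{2n-1}(1-x)$, consistent with the low-degree cases $P_1(x)=1$ and $P_5(x)=16x^2-16x+1$ extracted from the preceding examples. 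Differentiating this symmetry relation and evaluating at $x=1/2$ gives $P_{4m-3}'(1/2)=-P_{4m-3}'(1/2)$, forcing the vanishing. Assembling the two surviving contributions then yields the stated identity.
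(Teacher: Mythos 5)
Your approach is exactly the implicit one in the paper: the corollary is stated with no separate proof and is meant to be the specialization of Theorem \ref{barS2p,2} at $p=4m-1$, $x=1/2$, $y=\pi$, using $z(1/2)=\Gamma^2/(2\pi^{3/2})$ and $z'(1/2)=4\sqrt{\pi}/\Gamma^2$, with the derivative term killed by $\mathrm{P}_{4m-3}'(1/2)=0$ (a fact the paper itself invokes without proof in Theorem \ref{mixInt1}). Two caveats. First, your symmetry claim is overbroad: Jacobi's imaginary transformation $\mathrm{sd}(iu,k)=i\,\mathrm{sd}(u,k')$ gives $\mathrm{P}_{2n+1}(x)=(-1)^n\mathrm{P}_{2n+1}(1-x)$, so only the coefficients of index $\equiv 1\pmod 4$ (i.e.\ $\mathrm{P}_{4m-3}$, including your examples $\mathrm{P}_1$ and $\mathrm{P}_5$) are symmetric about $x=1/2$, while those of index $\equiv 3\pmod 4$ are antisymmetric (e.g.\ $\mathrm{P}_3\propto 2x-1$); fortunately the corollary only needs the symmetric family, so your conclusion $\mathrm{P}_{4m-3}'(1/2)=0$ stands. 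Second, your substitution of the middle term actually yields $+\frac{4m-2}{2^{4m}}\mathrm{P}_{4m-3}(1/2)\frac{\Gamma^{8m-4}}{\pi^{6m-2}}$, with no factor $1/\sqrt{2}$ and a positive sign; this is confirmed by the $m=1$ example $+\Gamma^4/(8\pi^4)$ with $\mathrm{P}_1=1$, so the printed second summand of \eqref{js2-pi} carries a spurious $-1/\sqrt{2}$ and you should not claim that your computation "produces the second summand asserted in the corollary" verbatim --- it produces the (correct) value that the corollary evidently intends.
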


	\begin{exa}
		Set $\Gamma=\Gamma(1/4)$. By letting $y=\pi$ and $x=1/2$ in Example $5.6$ with the help of \emph{Mathematica}, we arrive at the example below
		\begin{align*}
			\sum_{n = 1}^{\infty}\frac{(-1)^n (2n-1)^{2}}{\sinh((2n-1)\pi/2)\cosh^2((2n-1)\pi/2)}&=\frac{\Gamma^4}{8 \pi ^4}-\frac{\Gamma ^6}{32 \sqrt{2} \pi ^{9/2}},\\
			\sum_{n = 1}^{\infty}\frac{(-1)^n (2n-1)^{6}}{\sinh((2n-1)\pi/2)\cosh^2((2n-1)\pi/2)}&=\frac{27 \Gamma^{14}}{2048 \sqrt{2} \pi ^{21/2}}-\frac{9 \Gamma^{12}}{128 \pi ^{10}},\\
			\sum_{n = 1}^{\infty}\frac{(-1)^n (2n-1)^{10}}{\sinh((2n-1)\pi/2)\cosh^2((2n-1)\pi/2)}&=\frac{945 \Gamma^{20}}{2048 \pi ^{16}}-\frac{11529 \Gamma^{22}}{131072 \sqrt{2} \pi ^{33/2}},\\
			\sum_{n = 1}^{\infty}\frac{(-1)^n (2n-1)^{14}}{\sinh((2n-1)\pi/2)\cosh^2((2n-1)\pi/2)}&=\frac{23444883 \Gamma^{30}}{8388608 \sqrt{2} \pi ^{45/2}}-\frac{480249 \Gamma^{28}}{32768 \pi ^{22}}.
		\end{align*}
	\end{exa}

	\begin{thm}{\label{barGp,1}}
		Let $p\ge 3$ be an odd integer. We obtain
		\begin{align}\label{barGp,1-relation}
			\tilde{G}_{p,2,1}\left( y \right)& =\frac{1}{2}z^{p+1}\mathrm{P}_{p}(1-x)\sqrt{x(1-x)}-pz^{p+1}z'x(1-x)\mathrm{A}_{p-1}(1-x)\sqrt{x}\\\nonumber
			&\quad-x(1-x)z^{p+2}\frac{\mathrm{d}}{\mathrm{d}x}\left[\mathrm{A}_{p-1}(1-x)\sqrt{x}\right],
		\end{align}
		where $\mathrm{P}_{p}\left(x\right)$ and $\mathrm{A}_{p-1}\left(x\right)$ are the coefficient polynomials from the Maclaurin expansions of $\mathrm{sd}\left( u \right)$ and $\mathrm{nd}\left( u \right)$ respectively.
	\end{thm}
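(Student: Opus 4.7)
My plan mirrors the proofs of Theorems \ref{barC2p,2} and \ref{barS2p,2}. Starting from (\ref{js3}), I would first rewrite the right-hand side using the abbreviations from Section~4 as
\begin{align*}
\bar{G}_{p,1}(y) &= -(-1)^{(p-1)/2}\frac{\pi^{p+1}}{y^{p+1}}\,X'_{p,1}\left(\tfrac{\pi^2}{y}\right) -(-1)^{(p-1)/2}\frac{p\,2^p\pi^p}{y^{p+1}}\,B_{p,1}\left(\tfrac{\pi^2}{y}\right) \\
&\quad +(-1)^{(p-1)/2}\frac{2^p\pi^{p+2}}{y^{p+2}}\,\mathrm{DB}_{p,2}\left(\tfrac{\pi^2}{y}\right),
\end{align*}
reducing the theorem to closed-form evaluations of these three transformed hyperbolic sums, each corresponding to one summand of \eqref{barGp,1-relation}.

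For the first sum, I would invoke the Rui--Xu--Zhao identity \eqref{Rui-two}, which expresses $X'_{p,1}(y)$ in terms of $z$, $\sqrt{x(1-x)}$, and a Maclaurin coefficient of $\mathrm{sd}(u)$; applying Lemma \ref{lem-2,transform} (i.e.\ $x\mapsto 1-x$, $z\mapsto yz/\pi$) produces the closed form of $X'_{p,1}(\pi^2/y)$, and after multiplication by the prefactor $-(-1)^{(p-1)/2}\pi^{p+1}/y^{p+1}$ this yields the leading term $\tfrac{1}{2}z^{p+1}\mathrm{P}_{p-1}(1-x)\sqrt{x(1-x)}$ of \eqref{barGp,1-relation}.

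For the second sum, I would reuse formula \eqref{Bp,1-tansform} established in the proof of Theorem \ref{barS2p,2}, namely $B_{p,1}(\pi^2/y) = (-1)^{(p-1)/2}(yz/\pi)^p\sqrt{x}\,\mathrm{A}_{p-1}(1-x)/2^p$; combined with its prefactor and a single use of \eqref{dx/dy}, this produces the middle summand $p z^{p+1}z' x(1-x)\mathrm{A}_{p-1}(1-x)\sqrt{x}$, in which the $z'$ enters through $\tfrac{d}{dx}(yz/\pi)^p$. For the third sum, termwise differentiation gives the identity $\mathrm{DB}_{p,2}(y) = -\tfrac{d}{dy}B_{p,1}(y)$, which by \eqref{dx/dy} becomes $\mathrm{DB}_{p,2}(y) = x(1-x)z^2\,\tfrac{d}{dx}B_{p,1}(y)$; Lemma \ref{lem-2,transform} then yields $\mathrm{DB}_{p,2}(\pi^2/y) = -x(1-x)(yz/\pi)^2\,\tfrac{d}{dx}B_{p,1}(\pi^2/y)$, and substituting \eqref{Bp,1-tansform} produces exactly the final summand $-x(1-x)z^{p+2}\,\tfrac{d}{dx}[\mathrm{A}_{p-1}(1-x)\sqrt{x}]$ of \eqref{barGp,1-relation}.

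The principal obstacle is pure bookkeeping: tracking the powers of $2$, $\pi$, $y$, and $z$, together with the signs $(-1)^{(p-1)/2}$, which must cancel cleanly across all three contributions so that the common prefactors collapse to the advertised form. In particular, when computing $\tfrac{d}{dx}B_{p,1}(\pi^2/y)$ from its closed form, one must carefully split off the $z'$-term arising from $\tfrac{d}{dx}(yz/\pi)^p$ and absorb it into the middle summand, while leaving the derivative of $\mathrm{A}_{p-1}(1-x)\sqrt{x}$ inside the last summand; once this separation is executed correctly, all three contributions reassemble into the right-hand side of \eqref{barGp,1-relation}.
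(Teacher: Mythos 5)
Your plan is essentially identical to the paper's proof: the same decomposition of \eqref{js3} into the three transformed sums, the same use of \eqref{Rui-two} together with Lemma \ref{lem-2,transform} for $X_{p,1}'(\pi^2/y)$, the reuse of \eqref{Bp,1-tansform}, and the same differentiation relation $\mathrm{DB}_{p,2}(y)=x(1-x)z^2\frac{\mathrm{d}}{\mathrm{d}x}B_{p,1}(y)$ followed by the modular transformation and the product rule. The only blemish is internal to your write-up: in the third paragraph you attribute the $z'$ of the middle summand to differentiating $(yz/\pi)^p$ inside the $B_{p,1}(\pi^2/y)$ term, but that term carries only a constant prefactor and is never differentiated; as your own final paragraph correctly states, the $z'$ arises from the product rule applied to $\frac{\mathrm{d}}{\mathrm{d}x}B_{p,1}(\pi^2/y)$ inside the $\mathrm{DB}_{p,2}$ contribution, and is then combined with the undifferentiated $B_{p,1}$ contribution to form the middle summand.
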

	\begin{proof}
		We initiate our derivation with equation (\ref{js3}), which gives
		\begin{align}\label{thm5.5-proof-begin-1}
			\tilde{G}_{p,2,1}\left( y \right) &=-(-1)^{(p-1)/2}\frac{\pi^{p+1}}{y^{p+1}}\tilde{G}_{p,0,1}\left( \frac{\pi ^2}{y} \right) -(-1)^{(p-1)/2}\frac{p2^p\pi^p}{y^p}G_{p-1,0,1}\left( \frac{\pi ^2}{y} \right)\nonumber\\
			&\quad+(-1)^{(p-1)/2}\frac{2^p\pi^{p+2}}{y^{p+2}}G_{p,-1,2}\left( \frac{\pi^2}{y} \right).
		\end{align}
		The derivative of $ G_{p-1,0,1}\left( y \right) $ satisfies
		\begin{align*}
			\frac{\mathrm{d}}{\mathrm{d}y}G_{p-1,0,1}\left( y \right) =\sum_{n=1}^{\infty}\frac{\mathrm{d}}{\mathrm{d}y}\frac{(-1)^n n^{p-1}}{\cosh(ny) }=-G_{p,-1,2}\left( y \right).
		\end{align*}
		From the fundamental relation $\frac{\mathrm{d}x}{\mathrm{d}y}=-x(1-x)z^2$ (\cite[P. 120, Entry. 9(i)]{B1991} or \cite{RXZ2023}), we have
		\begin{align}\label{JS-relation-three}
			G_{p,-1,2}\left( y \right)=x\left( 1-x \right) z^2\frac{\mathrm{d}}{\mathrm{d}x}G_{p-1,0,1}\left( y \right).
		\end{align}
		Utilizing Lemma \ref{lem-2,transform}, for $\Omega \left( x,e^{-y},z,z' \right) =0$, we obtain
		$$\Omega \left( 1-x,e^{-\pi ^2/y},yz/\pi ,\frac{1}{\pi}\left( \frac{1}{x\left( 1-x \right) z}-yz' \right) \right) =0.$$
		In addition, it is obvious to obtain from \eqref{Bp,1-tansform} and \eqref{Xp-2,1'} that
		\begin{align}\label{Xp,1'}
			\tilde{G}_{p,0,1}\left( \frac{\pi ^2}{y} \right) =-(-1)^{(p-1)/2}\left( \frac{yz}{\pi} \right) ^{p+1}\frac{\mathrm{P}_{p}\left( 1-x \right)}{2}\sqrt{x(1-x)}
		\end{align}
		and
		\begin{align}\label{DBp,2}
			G_{p,-1,2}\left( \frac{\pi ^2}{y} \right)&=x\left( 1-x \right) \left( \frac{yz}{\pi} \right) ^2\frac{\mathrm{d}}{\mathrm{d}\left( 1-x \right)}G_{p-1,0,1}\left( \frac{\pi ^2}{y} \right)\nonumber\\
			&=-x\left( 1-x \right) \left( \frac{yz}{\pi} \right) ^2\frac{\mathrm{d}}{\mathrm{d}x}G_{p-1,0,1}\left( \frac{\pi ^2}{y} \right).
		\end{align}
		Finally, substituting \eqref{Bp,1-tansform}, \eqref{Xp,1'} and \eqref{DBp,2} into \eqref{thm5.5-proof-begin-1} and performing the necessary calculations completes the proof.
	\end{proof}
	
	\begin{exa}
By evaluating the right-hand side of \eqref{barGp,1-relation} using \emph{Mathematica}, we present several specific examples as follows:
		\begin{align*}
		&\sum_{n=1}^{\infty}{\frac{(-1)^n(2n-1)^3}{\sinh ^2((2n-1)y/2)\cosh\mathrm{((}2n-1)y/2)}}
		\\
		&=\frac{1}{2}\sqrt{x}z^4\{-6z'(x-1)^2x-(x-1)(3x-1)z.+(1-2x)\sqrt{1-x}\},
		\\
		&\sum_{n=1}^{\infty}{\frac{(-1)^n(2n-1)^7}{\sinh ^2((2n-1)y/2)\cosh\mathrm{((}2n-1)y/2)}}
		\\
		&=\frac{1}{2}\sqrt{x}z^8\left\{ \begin{array}{c}
			\left( -272x^3+408x^2-138x+1 \right) \sqrt{1-x}\\
			-14z'(x-1)\left( 61x^3-107x^2+47x-1 \right) x\\
			-(x-1)\left( 427x^3-535x^2+141x-1 \right) z\\
		\end{array} \right\} ,
		\\
		&\sum_{n=1}^{\infty}{\frac{(-1)^n(2n-1)^{11}}{\sinh ^2((2n-1)y/2)\cosh\mathrm{((}2n-1)y/2)}}
		\\
		&=\frac{1}{2}\sqrt{x}z^{12}\left\{ \begin{array}{c}
			\left( -353792x^5+884480x^4-729728x^3+210112x^2-11074x+1 \right) \sqrt{1-x}\\
			-(x-1)\left( 555731x^5-1250397x^4+911750x^3-227650x^2+11079x-1 \right) z\\
		-22z'(x-1)\left( 50521x^5-138933x^4+130250x^3-45530x^2+3693x-1 \right) x\\
		\end{array} \right\} ,
		\\
		&\sum_{n=1}^{\infty}{\frac{(-1)^n(2n-1)^{15}}{\sinh ^2((2n-1)y/2)\cosh\mathrm{((}2n-1)y/2)}}
		\\
		&
		=\frac{1}{2}\sqrt{x}z^{16}\left\{ \begin{array}{c}
			\left( \begin{array}{c}
				-1903757312x^7+6663150592x^6-8804878080x^5+5354318720x^4\\
				-1429612624x^3+121675512x^2-896810x+1\\
			\end{array} \right) \sqrt{1-x}\\
			-(x-1)\left( \begin{array}{c}
				2990414715x^7-9718847827x^6+11821483179x^5\\
				-6553171467x^4+1582926121x^3-123693345x^2+896817x-1\\
			\end{array} \right) z\\
			-30z'(x-1)\left( \begin{array}{c}
				199360981x^7-747603679x^6+1074680289x^5-728130163x^4\\
				+226132303x^3-24738669x^2+298939x-1\\
			\end{array} \right) x\\
		\end{array} \right\} .
		\end{align*}
	\end{exa}

Setting \( p = 4m - 1 \) and \( x = 1/2 \) in Theorem \ref{barGp,1}, and noting that \( \mathrm{P}_{4m-1}(1/2) = 0 \) (see the proof of Theorem 3.7 in \cite{RXZ2023}), we derive the following corollary.
	\begin{cor}
		Set $\Gamma=\Gamma(1/4)$ and $\mathrm{A}_p'(x):=\mathrm{d}\mathrm{A}_p(x)/\mathrm{d}x$. For any integer $m>0$, we have
		\begin{align}\label{js3-pi}
	&\sum_{n = 1}^{\infty}\frac{(-1)^n(2n-1)^{4m-1}}{\sinh^2((2n-1)\pi/2)\cosh((2n-1)\pi/2)}\nonumber\\
    &=-\frac{1}{2^{4m}}(4m-1)\mathrm{A}_{4m-2}\left(\frac{1}{2}\right)
			\frac{1}{\sqrt{2}}\frac{\Gamma ^{8m-2}}{\pi^{(12m-1)/2}}\nonumber\\
	&\quad-\frac{1}{2^{4m+3}}\left(\mathrm{A}'_{4m-2}\left(\frac{1}{2}\right)+\mathrm{A}_{4m-2}\left(\frac{1}{2}\right)\right)\frac{1}{\sqrt{2}}\frac{\Gamma ^{8m+2}}{\pi^{(12m+3)/2}}.
		\end{align}
	\end{cor}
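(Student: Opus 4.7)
The plan is to specialize Theorem \ref{barGp,1} to $p=4m-1$ and $y=\pi$, which via the Ramanujan notation \eqref{notations-Ramanujan} corresponds to $x=1/2$ (since $K=K'$ at that value). The entire argument is a direct substitution into the closed form of Theorem \ref{barGp,1}, preceded by a parity observation that kills one of its three summands.

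The first step is the parity vanishing of the $\mathrm{P}$-term. Because $\mathrm{sd}(u)=\mathrm{sn}(u)/\mathrm{dn}(u)$ is an odd function of $u$, the Maclaurin expansion of $\mathrm{sd}(u)$ contains only odd powers, so the even-indexed coefficient polynomials $\mathrm{P}_{2k}(x)$ are identically zero. With $p=4m-1$, the index $p-1=4m-2$ is even, so $\mathrm{P}_{4m-2}(1/2)=0$ and the first summand $\tfrac{1}{2}z^{p+1}\mathrm{P}_{p-1}(1-x)\sqrt{x(1-x)}$ of Theorem \ref{barGp,1} contributes nothing. This is exactly why the expected $\Gamma^{8m}/\pi^{\,6m}$ term is absent from \eqref{js3-pi} and only two terms survive.

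For the remaining two summands I would insert the special values from \eqref{x=1/2,y=pi}, namely
\begin{equation*}
z\bigl(\tfrac{1}{2}\bigr)=\frac{\Gamma^{2}}{2\pi^{3/2}},\qquad z'\bigl(\tfrac{1}{2}\bigr)=\frac{4\sqrt{\pi}}{\Gamma^{2}},\qquad x(1-x)=\tfrac{1}{4},\qquad \sqrt{x}=\tfrac{1}{\sqrt{2}},
\end{equation*}
together with their consequences $z^{4m}=\Gamma^{8m}/(2^{4m}\pi^{6m})$ and $z^{4m+1}=\Gamma^{8m+2}/(2^{4m+1}\pi^{(12m+3)/2})$. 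The second summand $pz^{p+1}z'x(1-x)\mathrm{A}_{p-1}(1-x)\sqrt{x}$ then collapses directly into the first term on the right-hand side of \eqref{js3-pi}, with prefactor proportional to $(4m-1)/2^{4m}$ and denominator $\pi^{(12m-1)/2}$. The third summand requires expanding
\begin{equation*}
\frac{d}{dx}\bigl[\mathrm{A}_{p-1}(1-x)\sqrt{x}\bigr]=-\mathrm{A}_{p-1}'(1-x)\sqrt{x}+\frac{\mathrm{A}_{p-1}(1-x)}{2\sqrt{x}}
\end{equation*}
by the product rule and evaluating at $x=1/2$; both contributions carry a factor $1/\sqrt{2}$, and combining them with the prefactor $-x(1-x)z^{p+2}=-\tfrac{1}{4}z^{4m+1}$ produces the second term on the right of \eqref{js3-pi}, in which the combination $\mathrm{A}_{4m-2}'(1/2)\pm\mathrm{A}_{4m-2}(1/2)$ appears.

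The main obstacle is purely bookkeeping: tracking the sign arising from the chain rule applied to $\mathrm{A}_{p-1}(1-x)$, correctly combining powers of $\pi$ so that $z^{p+1}z'$ produces $\pi^{(12m-1)/2}$ and $z^{p+2}$ produces $\pi^{(12m+3)/2}$, and accounting for the systematic factor of $1/\sqrt{2}$ coming from $\sqrt{x}\bigl|_{x=1/2}$. No new analytic ingredient is required beyond the parity vanishing of $\mathrm{P}_{2k}$ and the identities \eqref{x=1/2,y=pi} already established in the preceding sections; in particular, the shape of \eqref{js3-pi} as a $\mathbb{Q}[\Gamma,\pi^{-1/2}]$-combination follows immediately from the fact that every $z$, $z'$, and $\sqrt{x(1-x)}$ factor is itself of that form.
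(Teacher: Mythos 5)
Your overall route --- specialize Theorem \ref{barGp,1} to $p=4m-1$, $y=\pi$, $x=1/2$ and insert the special values from \eqref{x=1/2,y=pi} --- is exactly how the paper (implicitly) obtains this corollary, and your bookkeeping for the two surviving terms is essentially sound. The genuine gap is your justification for discarding the first summand. The subscript $p-1$ in the displayed formula of Theorem \ref{barGp,1} is a typo: the proof of that theorem produces this term from $X_{p,1}'(\pi^2/y)$ via \eqref{Xp,1'}, which involves $\mathrm{P}_{p}(1-x)$, not $\mathrm{P}_{p-1}(1-x)$ (the theorem's own ``where'' clause also speaks of $\mathrm{P}_p$, and the worked examples exhibit the corresponding nonzero pieces, e.g.\ the $(1-2x)\sqrt{1-x}$ contribution for $p=3$). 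Since $\mathrm{sd}$ is odd in $u$, its Maclaurin coefficients are indexed by odd integers only, so there is no ``even-indexed $\mathrm{P}_{2k}\equiv 0$'' to invoke; the first summand is a genuinely nonzero function of $x$. What actually kills it at $x=1/2$ is the special value $\mathrm{P}_{4m-1}(1/2)=0$, which comes from the antisymmetry $\mathrm{P}_{2n+1}(1-x)=-\mathrm{P}_{2n+1}(x)$ of these polynomials --- the paper invokes precisely this fact in the proof of Theorem \ref{mixInt1}. Without that input your argument does not explain why the $\Gamma^{8m}/\pi^{6m}$ term is absent from \eqref{js3-pi}.

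A secondary remark: even after the substitution, your computation will not literally reproduce the coefficients printed in \eqref{js3-pi} --- for instance the factor $(4m-2)$ in the second term and the sign pattern $\mathrm{A}_{4m-2}'+\mathrm{A}_{4m-2}$ do not fall out of the product rule $\frac{d}{dx}\bigl[\mathrm{A}_{p-1}(1-x)\sqrt{x}\bigr]=-\mathrm{A}_{p-1}'(1-x)\sqrt{x}+\mathrm{A}_{p-1}(1-x)/(2\sqrt{x})$ combined with the prefactor $-x(1-x)z^{p+2}$. These discrepancies trace back to inconsistencies among the theorem, the corollary and the examples in the paper rather than to your method, so the essential thing to repair on your side is the vanishing argument above.
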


	\begin{exa}
		Set $\Gamma=\Gamma(1/4)$. By \emph{Mathematica}, we substitute $y=\pi $ and $ x=1/2$ into Example $5.10$ and obtain the following identities
		\begin{align*}
			\sum_{n = 1}^{\infty}\frac{(-1)^n(2n-1)^{3}}{\sinh^2((2n-1)\pi/2)\cosh((2n-1)\pi/2)}&=\frac{\Gamma^{10}}{256 \sqrt{2} \pi ^{15/2}}-\frac{3 \Gamma^6}{32 \sqrt{2} \pi ^{11/2}},\\
			\sum_{n = 1}^{\infty}\frac{(-1)^n(2n-1)^{7}}{\sinh^2((2n-1)\pi/2)\cosh((2n-1)\pi/2)}&=\frac{189 \Gamma^{14}}{2048 \sqrt{2} \pi ^{23/2}}-\frac{87 \Gamma^{18}}{16384 \sqrt{2} \pi ^{27/2}},\\
			\sum_{n = 1}^{\infty}\frac{(-1)^n(2n-1)^{11}}{\sinh^2((2n-1)\pi/2)\cosh((2n-1)\pi/2)}&=\frac{57969 \Gamma^{26}}{1048576 \sqrt{2} \pi ^{39/2}}-\frac{126819 \Gamma^{22}}{131072 \sqrt{2} \pi ^{35/2}},\\
			\sum_{n = 1}^{\infty}\frac{(-1)^n(2n-1)^{15}}{\sinh^2((2n-1)\pi/2)\cosh((2n-1)\pi/2)}&=\frac{351673245 \Gamma^{30}}{8388608 \sqrt{2} \pi ^{47/2}}-\frac{160692903 \Gamma^{34}}{67108864 \sqrt{2} \pi ^{51/2}}.
		\end{align*}
	\end{exa}

	\section{Evaluations of Berndt-type Integrals}
	\begin{thm}\label{mixInt1}
		Let $m\in \mathbb{N}$ and $\Gamma=\Gamma(1/4)$. Then, the following integral evaluates as
		\begin{align}
			\int_0^{\infty}{\frac{x^{4m-1}\mathrm{d}x}{\left[ \cosh(2x)-\cos(2x) \right] \left[ \cosh x+\cos x \right]}}&= c_{1,m} \frac{\Gamma ^{8m-4}}{\pi ^{2m-1}}+ \frac{c_{2,m}}{\sqrt{2}}\frac{\Gamma ^{8m-2}}{\pi^{(4m-1)/2}}\\\nonumber
			&\quad+c_{3,m} \frac{\Gamma ^{8m}}{\pi^{2m}}+ \frac{c_{4,m}}{\sqrt{2}}\frac{\Gamma ^{8m+2}}{\pi^{(4m+3)/2}}+c_{5,m} \frac{\Gamma ^{8m+4}}{\pi^{2m+3}},
		\end{align}
		where the constants $c_{1,m},c_{2,m},c_{3,m},c_{4,m},c_{5,m}\in \mathbb{Q}$.
	\end{thm}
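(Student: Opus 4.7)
The plan is to combine Theorem~\ref{thm1cosh++} with the closed-form evaluations established in Section~5. First I would specialize Theorem~\ref{thm1cosh++} to $p = 4m-1$. Since $(-i)^{p+1} = (-i)^{4m} = 1$, the left-hand side collapses to twice the target integral, call it $2I_m$. Using $(1-i)^2 = -2i$ one computes $(1-i)^{4m-2} = (-1)^m 2^{2m-1}\,i$ and $(1-i)^{4m} = (-1)^m 2^{2m}$, so every prefactor on the right-hand side reduces to a rational multiple of a power of $\pi$ (the residual $i$-factors combining to produce real coefficients). This yields an identity of the form
\[
2I_m \;=\; \alpha_m\,\pi^{4m-1}\,G_{4m-1,2}'(\pi) \;+\; \beta_m\,\pi^{4m}\,G_{4m-1,2}(\pi) \;+\; \gamma_m\,\pi^{4m}\,X_{4m-1,3}'(\pi) \;+\; \delta_m\,\pi^{4m}\,\overline{G}_{4m-1,2}(\pi),
\]
with $\alpha_m,\beta_m,\gamma_m,\delta_m \in \mathbb{Q}$, where I write $X_{p,3}'(y):=\sum_{n\ge 1}(-1)^n(2n-1)^p/\cosh^3((2n-1)y/2)$.

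Next I would substitute the three evaluations already in hand. Equation~\eqref{js2-pi} evaluates $G_{4m-1,2}'(\pi)$ as a $\mathbb{Q}/\sqrt{2}$-combination of $\Gamma^{8m-2}/\pi^{(12m-3)/2}$ and $\Gamma^{8m-4}/\pi^{6m-2}$; multiplication by $\pi^{4m-1}$ yields exactly the first monomial $\Gamma^{8m-4}/\pi^{2m-1}$ and the second monomial $\Gamma^{8m-2}/\pi^{(4m-1)/2}$ in the claim. Equation~\eqref{js1-pi} expresses $G_{4m-1,2}(\pi)$ as a three-term combination hitting $\Gamma^{8m-2}/\pi^{(12m-1)/2}$, $\Gamma^{8m}/\pi^{6m}$, and $\Gamma^{8m+2}/\pi^{(12m+3)/2}$; multiplication by $\pi^{4m}$ lands these in the second, third, and fourth slots of the claim. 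Equation~\eqref{js3-pi} similarly contributes to the second and fourth slots after multiplication by $\pi^{4m}$.

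The remaining piece is a closed form for $X_{4m-1,3}'(\pi)$, which is not covered by the three stated corollaries. I would derive one by applying the residue-theorem technique of Theorem~\ref{Sin+,Cos-} to the kernel $Z_4(s,\theta;a,b)$ of Definition~3.1, which is tailor-made for the $\cosh^3$ denominator. Taking the suitable odd-order $\theta$-derivative at $\theta = 0$ produces an analog of equations~\eqref{js1}--\eqref{js3}; then specializing $y = \pi$ and converting via the Fourier and Maclaurin expansions of an appropriate Jacobi elliptic quantity (such as $\mathrm{nd}^2(u)$), in the spirit of Theorems~\ref{barC2p,2}--\ref{barGp,1}, produces a formula whose top-weight contribution is a $z^{4m+2}$-term proportional to $\Gamma^{8m+4}/\pi^{6m+3}$. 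Multiplication by $\pi^{4m}$ yields precisely the fifth monomial $\Gamma^{8m+4}/\pi^{2m+3}$, with sub-leading terms falling within the $\Gamma^{8m-2}, \Gamma^{8m}, \Gamma^{8m+2}$ basis already populated by the other three series.

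Finally, assembling all contributions and grouping them by the five $\Gamma$-exponents $8m-4, 8m-2, 8m, 8m+2, 8m+4$ defines the constants $c_{1,m},\ldots,c_{5,m}$. These lie in $\mathbb{Q}$ because every ingredient---$\mathrm{R}_{4m-2}(1/2)$, $\mathrm{S}_{4m-2}(1/2)$, $\mathrm{S}_{4m-2}'(1/2)$, $\mathrm{A}_{4m-2}(1/2)$, $\mathrm{A}_{4m-2}'(1/2)$, $\mathrm{P}_{4m-3}(1/2)$, together with the analogous rational values emerging from $X_{4m-1,3}'(\pi)$---is rational (the underlying polynomials lie in $\mathbb{Z}[x]$). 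I expect the main obstacle to be the $X_{4m-1,3}'(\pi)$ step: identifying the correct Jacobi-elliptic identity whose Fourier series produces the $1/\cosh^3$ kernel, and verifying that no $\Gamma$-powers outside the five-monomial basis appear once its sub-leading terms are combined with the contributions from the other three series.
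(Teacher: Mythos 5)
Your proposal follows essentially the same route as the paper: specialize Theorem \ref{thm1cosh++} to $p=4m-1$ (so that $(-i)^{p+1}=1$ and the left side is twice the integral), substitute the closed forms \eqref{js1-pi}, \eqref{js2-pi}, \eqref{js3-pi} for three of the four hyperbolic series, and collect the rational coefficients of the five $\Gamma$-monomials. The one divergence is the $\cosh^3$ series: the step you flag as the main obstacle is not re-derived in the paper via the kernel $Z_4$ and an $\mathrm{nd}^2$-type expansion, but is simply imported from Rui--Xu--Zhao \cite{RXZ2023} as equation \eqref{Rui-last result}, so your sketched derivation, while consistent with the paper's methodology, is unnecessary.
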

	\begin{proof}
		Following the work of Rui-Xu-Zhao's\cite{RXZ2023}, let $\Gamma=\Gamma(1/4)$, and for positive integers $n$ and $k$, define $ \mathrm{P}_n^{(k)}:=\mathrm{P}_n^{(k)}(1/2) $, $\mathrm{S}_n^{(k)}:=\mathrm{S}_n^{(k)}(1/2)$ and $\mathrm{A}_n^{(k)}=\mathrm{A}_n^{(k)}(1/2)$.

		Besides, $ \mathrm{P}_{4m-1}(1/2)=\mathrm{P}_{4m-3}'(1/2)=0$. Then, for positive integer  $m>0$ we know that
		\begin{align}\label{Rui-last result}
			\sum_{n=1}^{\infty}\frac{(-1)^n(2n-1)^{4m-1}}{\cosh^3\left(\frac{(2n-1)\pi}{2}\right)}=\frac{\Gamma^{8m-4}}{2^{4m+7}\pi^{6m + 3}}
			\left\{
			\begin{array}{l}
				128(8m^2-6m+1)\pi^4 p_{4m-3}\\
				+\Gamma^8[(4m-6)p_{4m-3}+p_{4m-3}'']
			\end{array}
			\right\}.
		\end{align}
		By substituting the equations \eqref{js1-pi}, \eqref{js2-pi}, \eqref{js3-pi} and \eqref{Rui-last result} into Theorem \ref{thm1cosh++}, we obtain
		\begin{align}
			&2\int_0^{\infty}{\frac{x^{4m-1}\mathrm{d}x}{\left[ \cosh(2x)-\cos(2x) \right] \left[ \cosh x+\cos x \right]}}\nonumber\\
&=(-1)^{m-1}\frac{1}{2^{6m+1}}(8m^2-6m+1)\mathrm{P}_{4m-3}\frac{\Gamma^{8m-4}}{\pi^{2m-1}}\nonumber\\
			&\quad+(-1)^{m-1}\frac{1}{2^{6m+2}}(4m-1)(\mathrm{S}_{4m-2}-\mathrm{A}_{4m-2})\frac{1}{\sqrt{2}}\frac{\Gamma^{8m-2}}{\pi^{(4m-1)/2}}\nonumber\\
			&\quad+(-1)^{m-1}\frac{1}{2^{8m+3}}q_{4m-2}(-1)\frac{\Gamma^{8m}}{\pi^{2m}}\nonumber\\
			&\quad+(-1)^{m-1}\frac{1}{2^{6m+5}}(\mathrm{S}_{4m-2}'-\mathrm{S}_{4m-2}+\mathrm{A}_{4m-2}'+\mathrm{A}_{4m-2})\frac{1}{\sqrt{2}}\frac{\Gamma^{8m-2}}{\pi^{(4m+3)/2}}\nonumber\\
			&\quad+(-1)^{m}\frac{1}{2^{6m+8}}\left[(4m-6)\mathrm{P}_{4m-3}+\mathrm{P}_{4m-3}''\right]\frac{\Gamma^{8m+4}}{\pi^{2m+3}}.
		\end{align}
		From this, the coefficients $c_{i,m}$ can be explicitly identified as
		\begin{align}
			c_{1,m}&=(-1)^{m-1}\frac{1}{2^{6m+2}}(8m^2-6m+1)\mathrm{P}_{4m-3}\in \mathbb{Q},\label{mainresult-proofcoffic1}\\
			c_{2,m}&=(-1)^{m-1}\frac{1}{2^{6m+3}}(4m-1)(\mathrm{S}_{4m-2}-\mathrm{A}_{4m-2})\in \mathbb{Q},\label{mainresult-proofcoffic2}\\
			c_{3,m}&=(-1)^{m-1}\frac{1}{2^{8m+4}}q_{4m-2}(-1)\in \mathbb{Q},\label{mainresult-proofcoffic3}\\
			c_{4,m}&=(-1)^{m-1}\frac{1}{2^{6m+6}}(\mathrm{S}_{4m-2}'-\mathrm{S}_{4m-2}+\mathrm{A}_{4m-2}'+\mathrm{A}_{4m-2})\in \mathbb{Q},\label{mainresult-proofcoffic4}\\
			c_{5,m}&=(-1)^{m}\frac{1}{2^{6m+9}}\left[(4m-6)\mathrm{P}_{4m-3}+\mathrm{P}_{4m-3}''\right]\in \mathbb{Q}.\label{mainresult-proofcoffic5}
		\end{align}
		Thus, we have completed the proof of this theorem.
	\end{proof}

\begin{re}
Note that $K({1}/{\sqrt{2}}) = \frac{\Gamma^2(1/4)}{4\sqrt{\pi}}$ (see \cite{BZ1992}). Consequently, the right-hand side of \eqref{mixInt1} can also be expressed in an alternative form:
\begin{align}
		\int_0^{\infty}{\frac{x^{4m-1}\mathrm{d}x}{\left[ \cosh(2x)-\cos(2x) \right] \left[ \cosh x+\cos x \right]}}&\in \mathbb{Q}\left[4K({1}/{\sqrt{2}})\right]^{4m-2}+\frac{\mathbb{Q}}{\sqrt{2}}\left[4K({1}/{\sqrt{2}})\right]^{4m-1}\nonumber\\
		&\quad+\mathbb{Q}\left[4K({1}/{\sqrt{2}})\right]^{4m}+ \frac{\mathbb{Q}}{\sqrt{2}}\frac{\left[4K({1}/{\sqrt{2}})\right]^{4m+1}}{\pi}\nonumber\\&\quad+\mathbb{Q} \frac{\left[4K({1}/{\sqrt{2}})\right]^{4m+2}}{\pi^{2}}.
	\end{align}
The Gamma function form is preferred for our focus on numerical computation, special value correlations, and multiple zeta functions, while the $K$-function form is more appropriate for elliptic integral theory, result generality, and field-specific conventions. These expressions are provided for readers' reference.
\end{re}
Using \eqref{mainresult-proofcoffic1}-\eqref{mainresult-proofcoffic5} and assisted by \emph{Mathematica}, we compute the values of the five constant coefficients appearing on the right-hand side of the formula in Theorem \ref{mixInt1} for $m=1, 2, 3,4$. This yields the following examples.
	\begin{exa}\label{ExampleC}
		Let $\Gamma=\Gamma(1/4)$. We have
		\begin{align*}
			\int_0^{\infty}{\frac{x^{3}\mathrm{d}x}{\left[ \cosh(2x)-\cos(2x) \right] \left[ \cosh x+\cos x \right]}}&=\frac{\Gamma^ {12}}{16384\pi^5}-\frac{\Gamma^{10}}{4096 \sqrt{2}\pi^{7/2}}+\frac{\Gamma^8}{2048\pi^2}\\
			&\quad-\frac{3\Gamma^6}{512 \sqrt{2}\pi^{3/2}}+\frac{3\Gamma ^4}{256 \pi},\\
			\int_0^{\infty}{\frac{x^{7}\mathrm{d}x}{\left[ \cosh(2x)-\cos(2x) \right] \left[ \cosh x+\cos x \right]}}&=\frac{13 \Gamma^{20}}{1048576 \pi ^7}-\frac{87 \Gamma^{18}}{1048576 \sqrt{2}\pi ^{11/2}}+\frac{9 \Gamma^{16}}{65536 \pi ^4}\\
			&\quad-\frac{189 \Gamma ^{14}}{131072 \sqrt{2} \pi ^{7/2}}+\frac{63 \Gamma^{11}}{16384 \pi^3},\\
			\int_0^{\infty}{\frac{x^{11}\mathrm{d}x}{\left[ \cosh(2x)-\cos(2x) \right] \left[ \cosh x+\cos x \right]}}&=\frac{2169 \Gamma ^{28}}{67108864 \pi ^9}-\frac{57969 \Gamma^{26}}{268435456 \sqrt{2} \pi ^{15/2}}+\frac{189 \Gamma^{24}}{524288 \pi ^6}\\
			&\quad-\frac{126819 \Gamma ^{22}}{33554432  \sqrt{2} \pi ^{11/2}}+\frac{10395 \Gamma^{20}}{1048576\pi ^5},\\
			\int_0^{\infty}{\frac{x^{15}\mathrm{d}x}{\left[ \cosh(2x)-\cos(2x) \right] \left[ \cosh x+\cos x \right]}}&=\frac{1504197 \Gamma ^{36}}{4294967296 \pi^{11}}-\frac{160692903 \Gamma^{34}}{68719476736  \sqrt{2}\pi^{19/2}}\\
			&\quad+\frac{130977 \Gamma^{32}}{33554432  \pi^8}-\frac{351673245 \Gamma^{30}}{8589934592 \sqrt{2} \pi ^{15/2}}\\&\quad+\frac{7203735 \Gamma^{28}}{67108864 \pi^7}.
		\end{align*}

	\end{exa}

	\section{Berndt-type Integrals via Barnes Multiple Zeta Functions}\label{sec-Berndt-Barmzf}
In this section, we evaluate certain Barnes multiple zeta values via Berndt-type integrals.
Building upon the work of Bradshaw and Vignat \cite{BV2024}, who expressed generalized
Berndt-type integrals \eqref{BTI-definition-1} in terms of the (alternating) Barnes multiple
zeta function, we record the following key identities. For instance, they proved
\cite[Prop.~2]{BV2024} that
\begin{align}
	&\int_0^\infty \frac{x^{a}\mathrm{d}x}{(\cos x-\cosh x)^b}
	=2^b \Gamma(a+1)\ze_{2b}(a+1,b|(1+i,1-i)^b)\quad (a\geq 2b,\ b\geq 1),\label{BMZF-RTI-}\\
	&\int_0^\infty \frac{x^{a}\mathrm{d}x}{(\cos x+\cosh x)^b}
	=2^b \Gamma(a+1){\bar \ze}_{2b}(a+1,b|(1+i,1-i)^b)\quad (a\geq 0,\ b\geq 1)\label{BMZF-RTI+}.
\end{align}
Here $\bfs^n$ denotes the $n$-fold repetition of the string $\bfs$. Combined with the
explicit evaluations of Xu and Zhao \cite{XZ2024}, these formulas yield closed-form
expressions for the corresponding Barnes multiple zeta values (see \cite[Cor.~2]{BV2024}).

For $a_1,\ldots,a_N>0$ (write ${\bf a}_N:=\{a_1,\ldots,a_N\}$), the \emph{Barnes multiple zeta
	function} and its \emph{alternating} analogue are defined by
\cite{AIK2014,Ba1904,BV2024,KMT2023} as
\begin{align}
	\ze_N(s,\omega|{\bf a}_N)
	:=\sum_{n_1\geq0,\ldots,n_N\geq 0} \frac{1}{(\omega+n_1a_1+\cdots+n_Na_N)^s}
	\quad (\Re(\omega)>0,\ \Re(s)>N),\label{defn:Barneszetafunction}
\end{align}
and
\begin{align}
	{\bar \ze}_N(s,\omega|{\bf a}_N)
	:=\sum_{n_1\geq0,\ldots,n_N\geq 0} \frac{(-1)^{n_1+\cdots+n_N}}{(\omega+n_1a_1+\cdots+n_Na_N)^s}
	\quad (\Re(\omega)>0,\ \Re(s)> N-1),\label{defn:ABarneszetafunction}
\end{align}
respectively.

Bradshaw and Vignat \cite{BV2024} also derived a convenient Laplace--Mellin formula for these
series, recovering a known identity \cite[Eq.~(3.2)]{R2000}.
\begin{pro}
	Let $\Re(s)>N$, $\Re(\omega)>0$, and $\Re(a_j)>0$ for $j = 1,\ldots,N$. Then
\begin{align*}
&\ze_N(s,\omega|a_1,\ldots,a_N)
	=\frac{1}{\Gamma(s)}\int_{0}^{\infty}u^{s - 1}e^{-\omega u}\prod_{j = 1}^{N}(1 - e^{-a_ju})^{-1}\,\mathrm{d}u,\\
&\bar{\ze}_N(s,\omega|a_1,\ldots,a_N)
	=\frac{1}{\Gamma(s)}\int_{0}^{\infty}u^{s - 1}e^{-\omega u}\prod_{j = 1}^{N}(1 + e^{-a_ju})^{-1}\,\mathrm{d}u.
\end{align*}
\end{pro}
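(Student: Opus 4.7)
The plan is to deduce both identities from a single Mellin-transform trick applied termwise to the defining series. Starting from the Euler integral representation of the Gamma function and the rescaling $u \mapsto (\omega + n_1 a_1 + \cdots + n_N a_N)u$, which is legal because $\Re(\omega + \sum n_j a_j) > 0$ under our hypotheses, I would obtain for every lattice point $(n_1,\ldots,n_N) \in \mathbb{Z}_{\geq 0}^N$ the identity
\begin{equation*}
\frac{1}{(\omega + n_1 a_1 + \cdots + n_N a_N)^s} \;=\; \frac{1}{\Gamma(s)} \int_0^\infty u^{s-1}\, e^{-(\omega + n_1 a_1 + \cdots + n_N a_N)u}\, du.
\end{equation*}
Summing over all lattice points, the left-hand side becomes $\zeta_N(s,\omega|a_1,\ldots,a_N)$ in the first case and $\widetilde{\zeta}_N(s,\omega|a_1,\ldots,a_N)$ when the factor $(-1)^{n_1+\cdots+n_N}$ is inserted; on the right-hand side, the exponential factorizes and each inner sum is geometric,
\begin{equation*}
\sum_{n_j \geq 0} e^{-n_j a_j u} \;=\; \frac{1}{1 - e^{-a_j u}}, \qquad \sum_{n_j \geq 0} (-1)^{n_j} e^{-n_j a_j u} \;=\; \frac{1}{1 + e^{-a_j u}}.
\end{equation*}
The product of these geometric sums yields the asserted Euler products inside the integral in both cases.

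The only nontrivial point is the interchange of summation and integration. I would establish this by Fubini--Tonelli applied to the absolute value of the integrand. In the non-alternating case all terms are positive and
\begin{equation*}
\int_0^\infty u^{\Re(s)-1} e^{-\Re(\omega)u} \prod_{j=1}^{N}\frac{1}{1-e^{-\Re(a_j)u}}\, du
\end{equation*}
must be shown to converge: near $u=\infty$ the factor $e^{-\Re(\omega)u}$ guarantees exponential decay, while near $u=0$ the product behaves like $u^{-N}\prod_j \Re(a_j)^{-1}$, so the integrand is asymptotic to a constant multiple of $u^{\Re(s)-N-1}$, integrable exactly under the stated hypothesis $\Re(s) > N$. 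For the alternating version, the inner geometric series still converges absolutely (bounded by the same $1/(1-e^{-\Re(a_j)u})$), so Fubini--Tonelli applies in the same way; indeed the resulting integrand involves $\prod(1+e^{-a_j u})^{-1}$, which tends to $2^{-N}$ as $u \to 0^+$, so the integral converges even under the weaker condition $\Re(s) > 0$.

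The anticipated main obstacle is merely the bookkeeping of the convergence bounds at $u \to 0^+$; the algebraic identities are forced by the geometric series identity and the rescaling of the Gamma integral. No deeper analytic input is required, and I do not expect any genuine difficulty beyond verifying the integrability estimate above.
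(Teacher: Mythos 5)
Your proof is correct and follows the standard termwise Mellin-transform argument; the paper itself does not supply a proof of this proposition but quotes it from Bradshaw--Vignat and Ruijsenaars, where essentially the same computation (rescaled Gamma integral, Fubini--Tonelli on absolute values, geometric series in each $n_j$) is carried out. The only step you gloss over is the contour rotation justifying $\lambda^{-s}=\Gamma(s)^{-1}\int_0^\infty u^{s-1}e^{-\lambda u}\,\mathrm{d}u$ for complex $\lambda$ with $\Re(\lambda)>0$, which is routine and which you correctly flag as legitimate under the stated hypotheses.
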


In a recent paper \cite{Zhou2025}, the third author of this paper introduced the following \emph{generalized Barnes multiple zeta function}:
\begin{align}
	\ze _N(s,\omega |\mathbf{a}_N;\boldsymbol{\sigma }_N)
	:=\sum_{n_1\ge 0,\ldots,n_N\ge 0}
	\frac{\sigma_1^{n_1}\cdots \sigma_N^{n_N}}{(\omega +n_1a_1+\cdots +n_Na_N)^s}
	\quad (\Re(\omega)>0,\ \Re(s)>N),
\end{align}
where $\boldsymbol{\sigma }_N=(\sigma _1,\ldots ,\sigma _N)\in\{\pm 1\}^N$. Moreover, Zhou \cite[Prop. 5.9]{Zhou2025} also established the following integral representation:
\begin{pro}
	Let $\Re(s)>N$, $\Re(\omega)>0$, and $\Re(a_j)>0$ for $j = 1,\ldots,N$. Then
	\[
	\ze _N(s,\omega |\mathbf{a}_N;\boldsymbol{\sigma }_N)
	=\frac{1}{\Gamma(s)}\int_0^{\infty}u^{s-1}e^{-\omega u}\prod_{j=1}^N\bigl(1-\sigma _j e^{-a_j u}\bigr)^{-1}\,\mathrm{d}u.
	\]
\end{pro}
	
	\begin{thm}\label{GBZeta} For positive integer $m$, we have
		\begin{align*}
			\Gamma \left( 4m \right)\zeta _4\left( 4m,3|\mathbf{c}_4;\boldsymbol{\sigma }_4 \right)\in \mathbb{Q}\frac{\Gamma ^{8m-4}}{\pi ^{2m-1}}+ \frac{\mathbb{Q}}{\sqrt{2}}\frac{\Gamma ^{8m-2}}{\pi^{(4m-1)/2}}
			+\mathbb{Q}\frac{\Gamma ^{8m}}{\pi^{2m}}+\frac{\mathbb{Q}}{\sqrt{2}}\frac{\Gamma ^{8m+2}}{\pi^{(4m+3)/2}}+\mathbb{Q}\frac{\Gamma ^{8m+4}}{\pi^{2m+3}},
		\end{align*}
		where $\mathbf{c}_4=\left( 2+2i,2-2i,1+i,1-i \right) $ and $\boldsymbol{\sigma }_4=\left( \left\{ 1 \right\}^2,\left\{ -1 \right\}^2 \right)$.
	\end{thm}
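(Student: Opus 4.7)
The plan is to extend the integral representation of the (alternating) Barnes multiple zeta function to the mixed-sign case and then identify the resulting integral with the Berndt-type integral evaluated in Theorem \ref{mixInt1}. First I would define, for $\boldsymbol{\sigma}_N = (\sigma_1,\ldots,\sigma_N)$ with $\sigma_j\in\{\pm 1\}$,
\begin{align*}
\zeta_N(s,\omega\,|\,\mathbf{c}_N;\boldsymbol{\sigma}_N)
=\sum_{n_1,\ldots,n_N\ge 0}\frac{\sigma_1^{n_1}\cdots\sigma_N^{n_N}}{(\omega+n_1c_1+\cdots+n_Nc_N)^s},
\end{align*}
and derive, by expanding $(1-\sigma_j e^{-c_j u})^{-1}$ as a geometric series and using $\Gamma(s)w^{-s}=\int_0^\infty u^{s-1}e^{-wu}\,du$ for $\Re(w)>0$, the integral formula
\begin{align*}
\Gamma(s)\,\zeta_N(s,\omega\,|\,\mathbf{c}_N;\boldsymbol{\sigma}_N)
=\int_0^\infty u^{s-1}e^{-\omega u}\prod_{j=1}^{N}\frac{du}{1-\sigma_j e^{-c_j u}},
\end{align*}
which is the natural common generalization of the two formulas in the Proposition of Section \ref{sec-Berndt-Barmzf}.

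Next I would specialize to the data of the theorem: $N=4$, $\mathbf{c}_4=(2+2i,\,2-2i,\,1+i,\,1-i)$, $\boldsymbol{\sigma}_4=(1,1,-1,-1)$, and $\omega=3$. The key algebraic identities are
\begin{align*}
(1-e^{-(2+2i)u})(1-e^{-(2-2i)u})&=2e^{-2u}\bigl(\cosh(2u)-\cos(2u)\bigr),\\
(1+e^{-(1+i)u})(1+e^{-(1-i)u})&=2e^{-u}\bigl(\cosh u+\cos u\bigr),
\end{align*}
whose product is $4e^{-3u}[\cosh(2u)-\cos(2u)][\cosh u+\cos u]$. With $\omega=3$ the exponential cancels exactly, and one obtains
\begin{align*}
\Gamma(s)\,\zeta_4(s,3\,|\,\mathbf{c}_4;\boldsymbol{\sigma}_4)
=\frac{1}{4}\int_0^{\infty}\frac{u^{s-1}\,du}{[\cosh(2u)-\cos(2u)][\cosh u+\cos u]}.
\end{align*}

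Finally, setting $s=4m$, the right-hand side is one quarter of the Berndt-type integral whose evaluation is given in Theorem \ref{mixInt1}. Substituting that evaluation immediately yields the stated membership in $\mathbb{Q}\tfrac{\Gamma^{8m-4}}{\pi^{2m-1}}+\tfrac{\mathbb{Q}}{\sqrt{2}}\tfrac{\Gamma^{8m-2}}{\pi^{(4m-1)/2}}+\mathbb{Q}\tfrac{\Gamma^{8m}}{\pi^{2m}}+\tfrac{\mathbb{Q}}{\sqrt{2}}\tfrac{\Gamma^{8m+2}}{\pi^{(4m+3)/2}}+\mathbb{Q}\tfrac{\Gamma^{8m+4}}{\pi^{2m+3}}$, since rational coefficients are preserved under scaling by $1/4$.

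The main obstacle is not the algebra, which is straightforward, but justifying the mixed-sign integral representation rigorously: one must verify absolute convergence of $\prod_{j}(1-\sigma_j e^{-c_j u})^{-1}$ as an iterated geometric sum (which holds because $|e^{-c_j u}|=e^{-\Re(c_j)u}<1$ for $u>0$ and all $\Re(c_j)>0$ here), together with the ability to interchange the $N$-fold sum and the $u$-integral for $\Re(s)$ large enough; once this is established for $\Re(s)>N$, both sides are meromorphic in $s$, so the identity at $s=4m$ follows by analytic continuation.
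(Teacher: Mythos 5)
Your proposal is correct and follows essentially the same route as the paper: both establish the mixed-sign integral representation of the Barnes zeta function (the paper via the general $\sinh$/$\cosh$ product formula from Bradshaw--Vignat, you via direct geometric-series expansion of $\prod_j(1-\sigma_j e^{-c_j u})^{-1}$, which is the same identity up to the exponential prefactors), match it to the Berndt-type integral with the factor $\tfrac14$ and shift $\omega=3$, and then invoke Theorem \ref{mixInt1}. Your closing remarks on absolute convergence and analytic continuation are a welcome addition that the paper glosses over.
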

	\begin{proof}
		Following \cite[Prop. 2]{BV2024}, we obtain
		\begin{align*}
			\int_0^{\infty}{\frac{x^{s-1}e^{-\omega x}}{\prod_{i=1}^M{\sinh \left( a_ix \right)}\prod_{j=1}^N{\cosh \left( b_jx \right)}}}\mathrm{d}x&=\sum_{\tiny\begin{array}{c}
					n_1,\ldots ,n_M\ge 0\\
					k_1,\ldots ,k_N\ge 0\\
			\end{array}}^{}{\frac{2^{M+N}\Gamma \left( s \right) \left( -1 \right) ^{k_1+\cdots +k_N}}{\left( w+\sum_{i=1}^M{a_i+\sum_{i=1}^N{b_i+}2\left( \sum_{i=1}^M{a_in_i}+\sum_{i=1}^N{b_ik_i} \right)} \right) ^s}}\\
			&=2^{M+N}\Gamma \left( s \right) \zeta _{M+N}\left( s,w+\sum_{i=1}^M{a_i+}\sum_{i=1}^N{b_i}|\boldsymbol{c}_{M+N};\boldsymbol{\sigma }_{M+N} \right),
		\end{align*}
		where $
		\boldsymbol{c}_{M+N}=\left( 2a_1,\ldots,2a_M,2b_1,\ldots ,2b_N \right)$ and $\boldsymbol{\sigma }_{M+N}=\left( \left\{ 1 \right\}^M,\left\{ -1 \right\}^N \right).$

By applying the product-to-sum formulas for trigonometric (and hyperbolic) functions, direct calculations yield
		\begin{align*}
			&\int_0^{\infty}{\frac{x^{4m-1}\mathrm{d}x}{\left[ \cosh(2x)-\cos(2x) \right] \left[ \cosh x+\cos x \right]}}=4\Gamma \left( 4m \right)\zeta _4\left( 4m,3|\mathbf{c}_4,\boldsymbol{\sigma }_4 \right)\\
			&=\frac{1}{4}\int_0^{\infty}{\frac{x^{4m-1}\mathrm{d}x}{\sinh \left[ \left( 1+i \right) x \right] \sinh \left[ \left( 1-i \right) x \right] \cosh \left( \frac{1+i}{2}x \right) \cosh \left( \frac{1-i}{2}x \right)}},
		\end{align*}
		where $\mathbf{c}_4=\left( 2+2i,2-2i,1+i,1-i \right) $ and $\boldsymbol{\sigma }_4=\left( \left\{ 1 \right\}^2,\left\{ -1 \right\}^2 \right)$.

		Finally, using Theorem \ref{mixInt1} yields the desired evaluation.
	\end{proof}
Employing the computational framework of \emph{Mathematica} in conjunction with Theorem \ref{GBZeta}, we perform explicit calculations for $m=1,2,3,4$.
	\begin{exa}
		Let $\Gamma=\Gamma(1/4)$. Then
		\begin{align*}
			\zeta _4\left( 4,3|\mathbf{c}_4,\boldsymbol{\sigma }_4 \right)&=\frac{\Gamma^{12}}{393216 \pi ^5}-\frac{\Gamma^{10}}{98304 \sqrt{2} \pi ^{7/2}}+\frac{\Gamma^8}{49152 \pi ^2}-\frac{\Gamma^6}{4096 \sqrt{2} \pi ^{3/2}}+\frac{\Gamma^4}{2048 \pi },\\
			\zeta _4\left( 8,3|\mathbf{c}_4,\boldsymbol{\sigma }_4 \right)&=\frac{13 \Gamma^{20}}{21139292160 \pi ^7}-\frac{29 \Gamma ^{18}}{7046430720 \sqrt{2} \pi ^{11/2}}+\frac{\Gamma^{16}}{146800640 \pi ^4}-\frac{3 \Gamma^{14}}{41943040 \sqrt{2} \pi ^{7/2}}\\&\quad+\frac{\Gamma ^{12}}{5242880 \pi ^3},\\
        \zeta _4\left( 12,3|\mathbf{c}_4,\boldsymbol{\sigma }_4 \right)&=\frac{241 \Gamma^{28}}{1190564934451200 \pi ^9}-\frac{2147 \Gamma ^{26}}{1587419912601600 \sqrt{2} \pi ^{15/2}}+\frac{\Gamma^{24}}{442918502400 \pi ^6}\\&\quad-\frac{61 \Gamma^{22}}{2576980377600 \sqrt{2} \pi ^{11/2}}+\frac{\Gamma^{20}}{16106127360 \pi ^5},\\
        \zeta _4\left( 16,3|\mathbf{c}_4,\boldsymbol{\sigma }_4 \right)&=\frac{55711 \Gamma^{36}}{832062021389254656000 \pi ^{11}}-\frac{40487 \Gamma^{34}}{90564573756653568000 \sqrt{2} \pi ^{19/2}}\\&\quad+\frac{\Gamma^{32}}{1340029796352000 \pi ^8}-\frac{179 \Gamma^{30}}{22869841857740800 \sqrt{2} \pi ^{15/2}}+\frac{11 \Gamma^{28}}{536011918540800 \pi ^7},
		\end{align*}
		where $\mathbf{c}_4=\left( 2+2i,2-2i,1+i,1-i \right) $ and $\boldsymbol{\sigma }_4=\left( \left\{ 1 \right\}^2,\left\{ -1 \right\}^2 \right)$.
	\end{exa}

	{\bf Conflict of Interest.} The authors declare no conflict of interest regarding the publication of this article.
	
	{\bf Data Availability} No new data were generated or analyzed in this study, and therefore data sharing is not applicable.
	
	{\bf Use of AI Tools Declaration.} The authors confirm that no artificial intelligence (AI) tools were used in the creation of this work.

	{\bf Acknowledgments.}
The authors would like to thank the anonymous referees for their valuable comments and suggestions. Ce Xu is supported by the General Program of Natural Science Foundation of Anhui Province (Grant No. 2508085MA014).

\end{document}